\newcommand{\Rb}{\mathbb{R}}
\newcommand{\Zb}{\mathbb{Z}}
\newcommand{\Nb}{\mathbb{N}}
\newcommand{\Lc}{\mathcal{L}}
\newcommand{\Vc}{\mathcal{V}}
\newcommand{\ri}{{\mathrm{i}}}
\newcommand{\im}{\mathrm{Im}}
\newcommand{\eps}{\varepsilon}
\newcommand{\inprod}[3]{\left\langle #1, #2 \right\rangle_{#3}  }
\newcommand{\we}{W^{\varepsilon}}
\newcommand{\invpi}{\frac{1}{(2\pi)^d}}
\newcommand{\epsh}{\frac{\varepsilon}{2}}
\newcommand{\half}{\frac{1}{2}}
\newcommand{\LdC}{L^2(\Rb^d)}
\newcommand{\LddC}{L^2(\Rb^{2d})}
\newcommand{\Vp}{\widetilde{V}}
\newcommand{\Bp}{\widetilde{\phi}_T^\eps}
\newcommand{\phip}{\widetilde{\phi}^\eps}
\newcommand{\fp}{\widetilde{f}^\eps}
\newcommand{\rmd}{\mathrm{d}}
\newcommand{\rmb}{\mathrm{b}}
\newcommand{\rmI}{\mathrm{I}}
\newcommand{\rmS}{\mathrm{S}}
\newcommand{\rmW}{\mathrm{W}}
\newcommand{\rmL}{\mathrm{L}}
\newcommand{\rmR}{\mathrm{R}}
\begin{document}

\title{Semi-classical limit of an inverse problem for the Schr\"odinger equation \thanks{Support for this research was provided by NSF under DMS-1750488, and the University of Wisconsin-Madison, Office of the Vice Chancellor for Research and Graduate Education with funding from the Wisconsin Alumni Research Foundation.}
}


\author{Shi Chen         \and
        Qin Li 
}


\institute{Shi Chen \at
              Department of Mathematics, University of Wisconsin-Madison, Madison, WI, 53706, USA \\
              \email{schen636@wisc.edu}           
           \and
           Qin Li \at
              Department of Mathematics and Wisconsin Institute for Discovery, University of Wisconsin-Madison, Madison, WI, 53706, USA \\
              \email{qinli@math.wisc.edu}
}

\date{Received: date / Accepted: date}

\maketitle

\begin{abstract}
  It is a classical derivation that the Wigner equation, derived from the Schr\"odinger equation that contains the quantum information, converges to the Liouville equation when the rescaled Planck constant $\eps\to0$. Since the latter presents the Newton's second law, the process is typically termed the (semi-)classical limit. In this paper, we study the classical limit of an inverse problem for the Schr\"odinger equation. More specifically, we show that using the initial condition and final state of the Schr\"odinger equation to reconstruct the potential term, in the classical regime with $\eps\to0$, becomes using the initial and final state to reconstruct the potential term in the Liouville equation. This formally bridges an inverse problem in quantum mechanics with an inverse problem in classical mechanics.
  \keywords{Semiclassical limits \and Schr\"odinger equation \and Wigner transform \and Liouville equation}
  \subclass{ 35R30, 65M32 }
\end{abstract}

\section{Introduction}
The classical limit, or the semi-classical limit of quantum mechanics is the ability of quantum theory to recover, or partially recover classical mechanics when the rescaled Planck constant $\eps$ is considered negligible. More specifically, by setting $\eps\approx0$ in the Schr\"odinger equation, one is expected to recover the Newtonian's law of motion (Newton's second law) in the asymptotic limit.

The concept of linking quantum mechanics and classical mechanics was already in formulation in 1920s, and was presented by N. Bohr in his Nobel lecture under the name of ``correspondence principle''. Since then, there have been abundant studies on deriving and proving the classical limits. While the formal derivation using WKB expansion is relatively easy to show, the discontinuity in the limiting equation (Hamiltonian-Jacobi equation) makes the rigorous mathematics analysis hard to obtain. In~\cite{GeMaMaPo:1997,RyPaKe:1996,BaKoRy:2010}, the authors, by introducing Wigner measures, flipped the studies to the phase space and expanded out the singularity, upon which, the derivation of classical limit was made rigorous.

We investigate the problem in an inverse setup. Suppose a quantum system is modeled by the Schr\"odinger equation, and one can measure the initial and final state, can one reconstruct the potential term (the field) in the equation? Moreover, if the quantum system is in the classical regime, with $\eps\approx 0$, can we view this inverse problem as the inverse problem for the Newtonian motion? What is the connection between the inverse Schr\"odinger and the inverse Newton's law? These questions essentially come down to deriving the classical limit of the inverse problem for the Schr\"odinger equation.

It is a relatively big topic, and in this paper in particular, we confine ourselves to the linearized setting. Namely, we assume the potential term is close to a preset background potential, and we are interested only in reconstructing the perturbation term. Under this setting, both the inverse Schr\"odinger problem and the inverse Newtonian motion problem can be formulated as Fredholm integrals, and it is the representatives (or the kernels) of the integrals that reveal the perturbed potential information. The question of deriving the classical limit, when confined in linearized setting, becomes: are the two representatives asymptotically equivalent when $\eps\to0$ in some sense?

The problem is of great interest, not only for our mathematical curiosity, but also for its practical use.

Since the fundamental question of bridging quantum mechanics and classical mechanics is mathematically clear, it is very natural to seek for its correspondence in the inverse setting. Indeed, in what sense can one view the inverse Schr\"odinger problem and the inverse Newtonian motion problem equivalently? Or, is it possible for one problem to be more stable than the other? This type of stability increasing/decreasing problem recently attracts a large amount of attention for various sets of problems~\cite{ChLiWa:2018,LaLiUh:2019,Wa:1999,NaUhWa:2013}.

Practically, the Schr\"odinger equation is not only regarded as the fundamental model for quantum mechanics, but also emerges as the limit of the Helmholtz equation when dynamics in different dimensions is described at separate scales~\cite{Go:2005}, and thus serves as a fundamental model for the wave propagation (for a fixed high frequency) as well. There are abundant applications, in which high-frequency waves are sent to detect the media~\cite{Ba:2013,BaRe:2011,BaReUhZh:2011,Be:2011,Sz:2004}. Mathematically, this is to seek for reconstructing the speed of sound in the Helmholtz equation, which is to reconstruct the potential term in the Schr\"odinger equation. Moreover, the inverse Schr\"odinger problem is also a transformed version of the celebrated Calder\'on problem, arises from Electrical Impedance Tomography (EIT)~\cite{Ca:2006}. For these reasons, inverse Schr\"odinger problem has long been regarded as one of the most important inverse problems. Most of the studies, however, set the Planck constant in the Schr\"odinger equation to be an $O(1)$ value. This is not practical in many applications mentioned above. In the high frequency regime for the Helmholtz equation, or in the classical regime with the rescaled $\eps\to0$, the stability of the inverse problem may change, and it would be of great practical interests to predict the stability in these regimes, and to quantify the reconstruction error in terms of the rescaled $\eps$. Linking it to the inverse Newtonian motion is a natural strategy.

Despite the great importance of the problem, the theoretical study has been thin, even though it is mentioned a couple of times in the literature~\cite{KeKaSh:1956,Jo:2014,Jo:2013,No:1999}. Most of the studies formulate the problem as the (quantum) scattering problem. See also the geometric version for reconstructing the refraction index~\cite{Mu:1981,MoStUh:2015,HoMoSt:2018,Mo:2014}. The obstacles come from (a) the disparity of the technicalities used in deriving the classical limit, and in analyzing inverse problems, and (b) the disparity in analyzing the two different inverse problems (inverse Schr\"odinger and inverse Newtonian motion). In this paper, we take an initial attempt to bridge the two under the linearized setting, hoping to unveil some connections that could potentially serve as stepping stones for further investigation. We should mention, that when the media encodes randomness, the classical limit of the Schr\"odinger equation (or similarly the wave equation) is the linear Boltzmann equation (or the radiative transfer equation) that characterizes the dynamics of photons on the mesoscopic level. The associated inverse problem is highly related to imaging, and has been studied in different contexts~\cite{BaPi:2007,BaRe:2008,BaPiRy:2015,CaSc:2015,HoKrSc:2018}.

The paper is organized as follows. In Section~\ref{sec:semi_classical}, we review the derivation of classical limit for the Schr\"odinger equation. This is done through applying the Wigner transform. In Section~\ref{sec:inverse}, we utilize the linearization approach to set up the frameworks for Schr\"odinger, Wigner and Liouville inverse problems. The relations between and the three inverse problems are considered in Section~\ref{sec:connecting}, including the equivalence of the Schr\"odinger and the Wigner inverse problem, and the convergence from the Wigner to the Liouville inverse problem as $\eps\to0$. Numerical tests are exploited in Section~\ref{sec:num_test} to demonstrate the convergence from the Wigner to the Liouville inverse problem.

\section{The Classical limit of the Schr\"odinger equation}\label{sec:semi_classical}
\subsection{Schr\"odinger equation}
In this section we present some preliminary results that show the classical limit of the Schr\"odinger equation in the $\eps\to0$ regime.

For a nonrelativisitic single particle, the time-dependent Schr\"odinger equation in position basis writes as:
\begin{equation}\label{eqn:schr}
\begin{aligned}
\ri\eps\partial_t\phi^\eps = -\half\eps^2\Delta_x\phi^\eps + V(x)\phi^\eps\,,\quad x\in\Rb^d\,, \quad t>0\,, \\
\phi^\eps(0,x) = \phi_\rmI^\eps(x)\,,\quad x\in\Rb^d\,.
\end{aligned}
\end{equation}
This is derived assuming the Hamiltonian is $H = \frac{1}{2}|k|^2+V(x)$, a summation of kinetic and potential energies of the particles constituting the system. In the equation, $\phi^\eps$ is the wave function, $\eps>0$ is the rescaled Planck constant, and $V(x)$ is the potential term.

Some physical quantities can be calculated using $\phi^\eps$. For example, the particle density $\rho^\eps$ and current density $J^\eps$ are calculated by
\begin{equation*}
\rho^\eps(t,x) = \left|\phi^\eps(t,x)\right|^2\,,\quad J^\eps(t,x) = \eps \im \left( \overline{\phi^\eps(t,x)} \nabla_x \phi^\eps(t,x) \right)\,.
\end{equation*}
These present the probability and the probability flux of the particle found in some spatial configuration at some instant of time, according to the Copenhagen interpretation. Both quantities are quadratic functionals of $\phi^\eps(t)$, and it is straightforward to derive, from~\eqref{eqn:schr}, the following conservation law:
\begin{equation*}
\partial_t \rho^\eps + \nabla_x\cdot J^\eps = 0\,.
\end{equation*}

A more general definition of physical observables can be given using phase space symbols and Weyl quantization~\cite{Ho:85}. To make it more explicit, let $a(x,k)$ be a symbol, then using Weyl quantization, we can define a pseudo-differential operator $a^{W}(x,\eps D_x)$ whose action on $f(x)$ leads to:
\begin{equation}\label{eqn:Weyl_op}
(a^\rmW(x,\eps D_x)f)(x) = \invpi \int_{\Rb^{2d}}a\left( \frac{x+y}{2},\eps k \right) f(y) e^{\ri (x-y)k}\rmd y \rmd k\,,
\end{equation}
where $\eps D_x = -\ri \eps \nabla_x$. We then define the expectation value of the symbol $a$ to be a quadratic functional of wave function $\phi^\eps(t)$:
\begin{equation*}
a[\phi^\eps(t)] = \inprod{\phi^\eps(t)}{a^\rmW(x,\eps D_x)\phi^\eps(t)}{\LdC}\,,
\end{equation*}
where $\inprod{\cdot}{\cdot}{\LdC}$ denotes the inner product on $\LdC$.

The well-posedness theory of Schr\"odinger equation~\eqref{eqn:schr} is classical. For $V=V(x)$ being continuous and bounded, i.e., $V\in C_b(\Rb^d)$, the Hamiltonian operator $\hat{H}^\eps$ is
\begin{equation}\label{eqn:pseudo_def}
\hat{H}^\eps\phi^\eps = -\frac{\eps^2}{2} \Delta_x \phi^\eps(x) + V(x)\phi^\eps(x)\,.
\end{equation}
It maps functions in $H^2(\Rb^d)\subset \LdC$ to $\LdC$, and is self-adjoint. By Stone's theorem, the operator $\frac{1}{\ri\varepsilon} \hat{H}^\eps$ generates a unitary, strongly continuous semi-group on $\LdC$, which guarantees a unique solution to the Schr\"odinger equation~\eqref{eqn:schr}. Moreover, the $\LdC$ inner product is conserved in time:
\begin{equation}\label{eqn:inprod_conserve}
\inprod{\phi_1^\eps(t)}{\phi_2^\eps(t)}{\LdC} = \inprod{\phi_1^\eps(0)}{\phi_2^\eps(0)}{\LdC}\,,\quad \forall t>0\,,
\end{equation}
for $\phi_i^\eps(t), i = 1, 2$ both solve the Schr\"odinger equation~\eqref{eqn:schr}.

\subsection{Wigner transform and the classical limit}
Wigner transform is one of many approaches used to derive (semi-)classical limit of Schr\"odinger equations. The technique was explored in depth in~\cite{GeMaMaPo:1997}. Let $\phi_1^\eps(t)$ and $\phi_2^\eps(t)$ solve the Schr\"odinger equation, and we define the corresponding Wigner transform:
\begin{equation}\label{eqn:wigner_transform}
\we[\phi_1^\eps,\phi_2^\eps](t, x,k) = \invpi \int_{\Rb^d}e^{\ri ky} \phi_1^\eps \left(t, x - \epsh y\right) \overline{\phi_2^\eps}\left(t, x + \epsh y \right)\rmd y\,.
\end{equation}
Here $\overline{\phi^\eps}$ is the complex conjugate of $\phi^\eps$. This definition is essentially the Fourier transform of the density matrix
\begin{equation*}
\left\langle x-\frac{\varepsilon}{2}y\middle|\phi_1^\eps\right\rangle \left\langle \phi_2^\eps\middle|x+\frac{\varepsilon}{2}y \right\rangle
\end{equation*}
in the $y$ variable.

We furthermore abbreviate $\we[\phi^\eps,\phi^\eps]$ to be $\we[\phi^\eps]$. It is then straightforward to show that $\we[\phi^\eps]$ is real-valued.

Note that the Wigner transform loses the phase information: Changing $\phi^\eps(t)$ to $\phi^\eps(t) e^{\ri S(t)}$, one obtains the same corresponding Wigner function. Moreover, it is not guaranteed that $\we[\phi^\eps]$ is positive, and thus it does not serve directly as the particle density on the phase space. However, the quantum expectation of physical observables can be easily recovered using the Wigner function. Using the symbol defined in~\eqref{eqn:Weyl_op}, it can be shown~\cite{Ho:85} that
\begin{equation*}
a[\phi^\eps(t)]=\inprod{\phi^\eps(t)}{a^\rmW(x,\eps D_x)\phi^\eps(t)}{\LdC} = \int_{\Rb^{2d}}a(x,k)\we[\phi^\eps(t)] \rmd x \rmd k\,.
\end{equation*}

In particular, the first and second moments in $k$ of $\we[\phi^\eps]$ exactly recover the particle density $\rho^\eps(t)$ and the current density $J^\eps(t)$:
\begin{equation*}
\rho^\eps(t,x) = \int_{\Rb^d} \we[\phi^\eps(t)] (t,x,k) \rmd k\,,\quad J^\eps(t,x) = \int_{\Rb^d} k \we[\phi^\eps(t)] (t,x,k) \rmd k\,.
\end{equation*}

We now derive the equation for $\we[\phi_1^\eps,\phi_2^\eps]$, as summarized in the following lemma.
\begin{lemma}\label{lem:wigner}
Let $\phi_1^\eps(t)$ and $\phi_2^\eps(t)$ solve the Schr\"odinger equation~\eqref{eqn:schr}, and define
\begin{equation*}
f^\eps(t,x,k) = \we[\phi_1^\eps,\phi_2^\eps](t,x,k)\,.
\end{equation*}
Then $f^\eps$ satisfies the following Wigner equation:
\begin{equation}\label{eqn:wigner}
\begin{aligned}
\partial_t f^\eps + k\cdot\nabla_x f^\eps = \Lc_{V}^\eps[f^\eps]\,,\quad (x,k)\in\Rb^{2d}\,, \quad t>0\,, \\
f^\eps(0,x,k) = f_\rmI^\eps(x,k)\,,
\end{aligned}
\end{equation}
with $f_\rmI^\eps(x,k)$ being the Wigner transform of initial conditions $\phi_1^\eps(0)$ and $\phi_2^\eps(0)$, and the operator $\Lc_V^\eps$ is defined as:
\begin{equation}\label{eqn:wigner_Lc}
\Lc_V^\eps[f^\eps] = \ri \invpi \int_{\Rb^{2d}} \delta^\eps [V](x,y) f^\eps(x,p) e^{\ri y(k-p)}   \rmd y \rmd p\,.
\end{equation}
Here $\delta^\eps [V](x,y) = \frac{1}{\eps}\left[ V\left(x+\half\eps y\right) - V\left(x-\half\eps y\right) \right]$. Equivalently, one can also write
\begin{equation}\label{eqn:col_Deps}
\Lc_{V}^\eps[f^\eps] = \ri\int_{\Rb^{2d}} e^{\ri p(x-y)} V(y)D^\eps f^\eps(x,k,p)\rmd p \rmd y\,,
\end{equation}
where the term $D^\eps f^\eps$ is defined by
\begin{equation}\label{eqn:D_eps}
D^\eps f^\eps(x,k,p) = \frac{1}{\eps}\left[f^\eps\left(x,k+\half\eps p\right)-f^\eps\left(x,k-\half\eps p\right)\right]\,.
\end{equation}
\end{lemma}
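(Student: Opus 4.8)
The plan is to establish~\eqref{eqn:wigner} by differentiating $f^\eps=\we[\phi_1^\eps,\phi_2^\eps]$ directly in time and substituting the Schr\"odinger dynamics. Differentiating under the integral sign in~\eqref{eqn:wigner_transform} and applying the product rule writes $\partial_t f^\eps$ as a sum of two integrals against $e^{\ri ky}$: one carries $\partial_t\phi_1^\eps$ evaluated at $x-\epsh y$ times $\overline{\phi_2^\eps}$ at $x+\epsh y$, the other carries $\phi_1^\eps$ at $x-\epsh y$ times $\partial_t\overline{\phi_2^\eps}$ at $x+\epsh y$. Into these I substitute~\eqref{eqn:schr} in the form $\partial_t\phi^\eps=\frac{\ri\eps}{2}\Delta_x\phi^\eps-\frac{\ri}{\eps}V\phi^\eps$ and, using that $V$ is real-valued, $\partial_t\overline{\phi^\eps}=-\frac{\ri\eps}{2}\Delta_x\overline{\phi^\eps}+\frac{\ri}{\eps}V\overline{\phi^\eps}$. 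This splits $\partial_t f^\eps$ into a ``kinetic'' part, collecting the Laplacian contributions, and a ``potential'' part, collecting the $V$ contributions, which I treat in turn.

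For the kinetic part, set $g(x,y):=\phi_1^\eps(t,x-\epsh y)\,\overline{\phi_2^\eps}(t,x+\epsh y)$. The key computational observation is a chain-rule identity that converts the $x$-Laplacians evaluated at the shifted arguments into a mixed derivative of $g$: one verifies $(\Delta_x\phi_1^\eps)(t,x-\epsh y)\,\overline{\phi_2^\eps}(t,x+\epsh y)-\phi_1^\eps(t,x-\epsh y)\,(\Delta_x\overline{\phi_2^\eps})(t,x+\epsh y)=-\frac{2}{\eps}\,\nabla_x\!\cdot\!\nabla_y g$, where the gradient--gradient cross terms cancel because the $\pm\epsh$ shifts carry opposite signs. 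Multiplying by the prefactor $\frac{\ri\eps}{2}$, integrating by parts in $y$ so that $\nabla_y$ falls on $e^{\ri ky}$ and produces a factor $\ri k$, and recognizing $\invpi\int_{\Rb^d}e^{\ri ky}g\,\rmd y=f^\eps$, this part collapses to exactly $-k\cdot\nabla_x f^\eps$; keeping track of the factors of $\ri$ and $\eps$ is the only place to be careful.

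For the potential part, the Laplacian-free terms combine into $\frac{\ri}{\eps}\left[V(x+\epsh y)-V(x-\epsh y)\right]g=\ri\,\delta^\eps[V](x,y)\,g$. To reach the form~\eqref{eqn:wigner_Lc} I invert the Fourier transform defining $f^\eps$, i.e. write $g(x,y)=\int_{\Rb^d}f^\eps(t,x,p)\,e^{-\ri py}\,\rmd p$, substitute, and exchange the $y$- and $p$-integrations, which leaves precisely $\Lc_V^\eps[f^\eps]$. For the equivalent form~\eqref{eqn:col_Deps} I instead keep $g$ intact and insert the Fourier representation $V(z)=\invpi\int_{\Rb^{2d}}V(w)\,e^{\ri\xi(z-w)}\,\rmd w\,\rmd\xi$ at $z=x\pm\epsh y$; the resulting $y$-integral $\invpi\int_{\Rb^d}e^{\ri ky}e^{\mp\ri\frac{\eps}{2}\xi y}g\,\rmd y$ equals $f^\eps(t,x,k\mp\epsh\xi)$, so the difference of the two shifted contributions becomes $f^\eps(t,x,k+\epsh\xi)-f^\eps(t,x,k-\epsh\xi)=\eps\,D^\eps f^\eps(x,k,\xi)$ as in~\eqref{eqn:D_eps}, and relabeling $\xi\to p$, $w\to y$ yields~\eqref{eqn:col_Deps}.

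The main obstacle is not the algebra but its rigorous justification: differentiating under the integral sign, invoking Fubini, integrating by parts in $y$, and using Fourier inversion all presuppose decay and integrability that a generic $H^2(\Rb^d)$ (or merely $\LdC$) solution of~\eqref{eqn:schr} need not possess --- for such data $\we[\phi_1^\eps,\phi_2^\eps]$ is only a tempered distribution. I would therefore first prove the identity for Schwartz-class initial data, for which every manipulation above is legitimate and the Schr\"odinger flow preserves the class, and then pass to general data either by interpreting~\eqref{eqn:wigner} in the distributional sense or by a density argument relying on the $\LdC$-continuity of the solution map of~\eqref{eqn:schr} together with the corresponding continuity of the Wigner transform.
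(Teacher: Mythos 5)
Your proposal is correct and is precisely the computation the paper has in mind: the paper's ``proof'' of Lemma~\ref{lem:wigner} consists of the single remark that one need only plug the Schr\"odinger equation into the time derivative of the Wigner transform (deferring rigor to~\cite{GeMaMaPo:1997}), and your kinetic/potential splitting, the identity $(\Delta_x\phi_1^\eps)\overline{\phi_2^\eps}-\phi_1^\eps\Delta_x\overline{\phi_2^\eps}=-\frac{2}{\eps}\nabla_x\cdot\nabla_y g$, and the Fourier manipulations carry that out faithfully, including the Schwartz-class-then-density caveat that the paper itself acknowledges. One minor remark: carried out exactly as you describe, the second form of the collision operator emerges with prefactor $\ri\,\invpi$ rather than the bare $\ri$ displayed in~\eqref{eqn:col_Deps}; this is a normalization slip in the paper (the $\invpi$ version is the one consistent with~\eqref{eqn:wigner_Lc} and with the representative~\eqref{eqn:wigner_rep}), not a defect of your argument.
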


We note that $\Lc^\eps_V$ is an operator that is anti-self-adjoint for all real-valued potential $V$. To see that, we first define
\begin{equation}\label{eqn:col_ker}
\Vc^\eps(x,k) = \ri \invpi \int_{\Rb^d} \delta^\eps [V](x,y) e^{\ri yk} \rmd y\,.
\end{equation}
This allows us to simplify~\eqref{eqn:wigner_Lc} to a convolution form
\begin{equation*}
\Lc_V^\eps[f^\eps] = \int_{\Rb^{2d}}  \Vc^\eps(x,k-p) f(x,p) \rmd p = \Vc^\eps \ast_k f^\eps \,.
\end{equation*}
Since $\Vc^\eps(x,-k) = -\overline{\Vc^\eps}(x,k)$, it is straightforward to see
\begin{equation*}
\inprod{\Vc^\eps \ast_k f_1^\eps}{f_2^\eps}{\LddC}
= - \inprod{f_1^\eps}{\Vc^\eps \ast_k  f_2^\eps}{\LddC}\,,
\end{equation*}
meaning:
\begin{equation}\label{eqn:self_ajoint_L}
\inprod{\Lc_V^\eps[f_1^\eps]}{f_2^\eps}{\LddC}= - \inprod{f_1^\eps}{\Lc_V^\eps[f_2^\eps]}{\LddC}\,.
\end{equation}

To derive the Wigner equation~\eqref{eqn:wigner}, one only needs to plug in the Schr\"odinger equation for both $\phi_1^\eps$ and $\phi_2^\eps$. The statement of the lemma is formal, but one can make it rigorous in $\LdC$. We omit the derivation from this paper, but refer interested readers to~\cite{GeMaMaPo:1997}.

The nice format of the Wigner equation makes it easy to obtain the classical limit. Indeed, formally, as $\eps\to0$, $\delta^\eps[V]\rightarrow y\cdot\nabla_x V$. Then according to the definition of the operator~\eqref{eqn:wigner_Lc}, we have
\begin{equation*}
\Lc_{V}^\eps[f^\eps] \to \nabla_x V \cdot \nabla_k f^\eps + O(\varepsilon^2)\,.
\end{equation*}
This means the asymptotic limit of~\eqref{eqn:wigner}, up to the truncation of $O(\eps^2)$, is the Liouville equation:
\begin{equation}\label{eqn:limiteqn}
\partial_t f + k\cdot\nabla_x f - \nabla_x V \cdot \nabla_k f = 0\,.
\end{equation}
Following the characteristic of this equation we have:
\begin{equation}\label{eqn:hamilton}
\dot{x} = k\,,\quad \dot{k} = -\nabla_x V(x)\,.
\end{equation}
This is exactly the same as the Newtonian law of motion generated by the Hamiltonian $H(x,k) = \frac{1}{2}|k|^2 + V(x)$.

This formal analysis can be made rigorous. Indeed in~\cite{GeMaMaPo:1997} the authors studied a general Hamiltonian system and derived the asymptotic limit for the Wigner equation. In our special case, it becomes:
\begin{theorem}\label{thm:semiclassical_limit}
Suppose the potential $V(x)$ satisfies
\begin{equation}\label{eqn:bounded_grad}
V(x) \in C^\infty(\Rb^d;\Rb): \quad |\partial_x^{\alpha} V(x)| \leq C_{\alpha} \quad \forall \alpha\in \Nb^d\,,
\end{equation}
then the Wigner transform $f^\eps(t,x,k)$ of $\phi^\eps(t)$, the solution to Schr\"odinger equation~\eqref{eqn:schr}, converges,in the weak-$\ast$ sense, locally uniformly in $t$ to the measure $f(t,x,k)$ that solves:
\begin{equation}\label{eqn:liouville_thm}
\partial_t f + k \cdot\nabla_x f - \nabla_x V \cdot \nabla_k f = 0\,, \quad f(0,x,k) = f_I(x,k)\,.
\end{equation}
The initial data $f_I$ is the weak-$*$ limit of Wigner transform of $\phi^\eps_\rmI(t)$.
\end{theorem}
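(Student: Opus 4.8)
The plan is to follow the Wigner--measure method of~\cite{GeMaMaPo:1997}. Introduce the Banach space $\mathcal{A}=\{\varphi\in C_0(\Rb^{2d}):(\mathcal{F}_k\varphi)(x,\cdot)\in L^1(\Rb^d_y;C_0(\Rb^d_x))\}$ with the associated $L^1$ norm; it is separable and contains $\mathcal{S}(\Rb^{2d})$ densely. A direct estimate from~\eqref{eqn:wigner_transform} using the $\LdC$ conservation~\eqref{eqn:inprod_conserve}, $\|\phi^\eps(t)\|_{\LdC}=\|\phi^\eps_\rmI\|_{\LdC}$, yields $|\inprod{f^\eps(t)}{\varphi}{}|\le C\|\phi^\eps_\rmI\|_{\LdC}^2\|\varphi\|_{\mathcal{A}}$ uniformly in $\eps>0$ and $t\ge0$ (after passing to a subsequence we may assume $\sup_\eps\|\phi^\eps_\rmI\|_{\LdC}<\infty$, which also ensures the weak-$*$ limit $f_I$ of $f^\eps_\rmI$ exists along it). Thus each $\{f^\eps(t)\}$ sits in a fixed ball of $\mathcal{A}'$, which is weak-$*$ sequentially compact. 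Replacing $f^\eps$ by its Husimi (Gaussian-mollified) transform changes nothing in $\mathcal{A}'$ as $\eps\to0$, but the Husimi transform is nonnegative with total mass $\|\phi^\eps_\rmI\|_{\LdC}^2$; hence any weak-$*$ limit below is a nonnegative measure.

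Next, equicontinuity in $t$. From the Wigner equation~\eqref{eqn:wigner}, for $\varphi\in\mathcal{S}(\Rb^{2d})$, $\frac{\rmd}{\rmd t}\inprod{f^\eps(t)}{\varphi}{}=\inprod{f^\eps(t)}{k\cdot\nabla_x\varphi}{}+\inprod{\Lc_V^\eps[f^\eps(t)]}{\varphi}{}$; the first term is controlled by the bound above since $k\cdot\nabla_x\varphi\in\mathcal{A}$, and the second, rewritten via~\eqref{eqn:self_ajoint_L} as $-\inprod{f^\eps(t)}{\Lc_V^\eps\varphi}{}$, is controlled once one checks that $\|\Lc_V^\eps\varphi\|_{\mathcal{A}}$ is bounded uniformly in $\eps$, which follows from the convolution kernel~\eqref{eqn:col_ker} and the bounded-derivative hypothesis~\eqref{eqn:bounded_grad}. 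So $t\mapsto\inprod{f^\eps(t)}{\varphi}{}$ is equi-Lipschitz for each $\varphi$ in a countable dense subset of $\mathcal{A}$; Arzel\`a--Ascoli and a diagonal argument give $\eps_n\to0$ and $f\in C(\Rb^+;\mathcal{A}'\text{-}w*)$ with $f^{\eps_n}(t)\rightharpoonup f(t)$ locally uniformly in $t$, the limit $f(t)$ being a nonnegative measure.

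The decisive step is passing to the limit in $\frac{\rmd}{\rmd t}\inprod{f^{\eps_n}(t)}{\varphi}{}=\inprod{f^{\eps_n}(t)}{k\cdot\nabla_x\varphi}{}-\inprod{f^{\eps_n}(t)}{\Lc_V^{\eps_n}\varphi}{}$. The transport term converges by the previous step. For the potential term one does \emph{not} pass to the limit in $f^\eps$ directly against $\Lc_V^\eps$; instead one transfers $\Lc_V^\eps$ onto the smooth $\varphi$, where, by Taylor expansion, $\delta^\eps[V](x,y)=y\cdot\nabla_x V(x)+O(\eps^2\sup|\partial_x^3V|\,|y|^3)$ with the remainder uniformly small by~\eqref{eqn:bounded_grad}; feeding this into~\eqref{eqn:col_Deps}--\eqref{eqn:D_eps} shows $\Lc_V^{\eps_n}\varphi\to\nabla_x V\cdot\nabla_k\varphi$ \emph{strongly} in $\mathcal{A}$. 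Pairing this strong convergence against the weak-$*$ convergence $f^{\eps_n}(t)\rightharpoonup f(t)$ (uniformly on compact $t$-intervals) lets the limit pass, and one obtains, distributionally on $\Rb^+\times\Rb^{2d}$, $\partial_t f+k\cdot\nabla_x f-\nabla_x V\cdot\nabla_k f=0$ with $f(0)=f_I$.

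Finally, since $V\in C^\infty$ has globally bounded derivatives, $\nabla_x V$ is globally Lipschitz, so the Hamiltonian flow $\Phi_t$ of~\eqref{eqn:hamilton} is complete and smooth, and~\eqref{eqn:liouville_thm} has the unique measure solution $f(t)=(\Phi_t)_\#f_I$; by this uniqueness the convergence $f^\eps(t)\rightharpoonup f(t)$ holds along the full family $\eps\to0$, not just along $\eps_n$. The main obstacle is exactly the potential term in the third step: $f^\eps$ converges only weak-$*$ while the symbol of $\Lc_V^\eps$ converges but not in the operator norm relevant to $\mathcal{A}'$, so a naive limit of the product is not justified; the resolution is the asymmetric move above --- weak on the measure, strong on the test function --- which relies essentially on~\eqref{eqn:bounded_grad} to kill the Taylor remainder of $\delta^\eps[V]$ uniformly.
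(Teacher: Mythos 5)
Your outline is precisely the standard Wigner-measure argument of \cite{GeMaMaPo:1997} (uniform bounds in the dual of the test algebra $\mathcal{A}$, positivity via the Husimi transform, equicontinuity in $t$, moving $\Lc_V^\eps$ onto the test function where it converges strongly, and uniqueness of the limiting Liouville flow to pass from subsequences to the full family), and this is exactly what the paper itself relies on: it gives no proof of Theorem~\ref{thm:semiclassical_limit} and simply cites that reference. The only wording to correct is that the uniform bound $\sup_\eps\|\phi^\eps_\rmI\|_{\LdC}<\infty$ is an (implicit) hypothesis of the theorem --- it is what makes the weak-$*$ limit $f_I$ of the initial Wigner transforms meaningful --- and cannot be arranged by passing to a subsequence.
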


\section{Three inverse problems}\label{sec:inverse}
We are now facing three equations: the original Schr\"odinger equation, the Wigner equation, and the Liouville equation as the classical limit of the Wigner equation. With respect to these three equations, we can formulate three inverse problems, all of which will be derived in this section.

We employ the same setup for the three inverse problems: we assume the equations are Cauchy problems without boundary constraints, and we confine ourselves to the linearized setting. This is to assume the potential term $V$ is close enough to a background $V_\rmb$. The given input is the initial data and one can measure the final state at a given time $T$. The to-be-reconstructed parameter is the potential term $V$ (or equivalently $\widetilde{V} = V-V_\rmb$).

We present the three inverse problems in the following three subsection respectively.

\subsection{A linearized inverse problem for the Schr\"odinger equation}\label{sec:lin_inv_Schr}
Recall the Schr\"odinger equation in $\Rb^d$ is
\begin{equation}\label{eqn:schr_full}
\ri\eps\partial_t\phi^\eps = -\half\eps^2\Delta_x\phi^\eps + V(x)\phi^\eps\,.
\end{equation}
Let the initial data be $\phi^\eps(0,x) = \phi_\rmI^\eps(x)$, and final data at $t=T$ be $\phi^\eps(T,x) = \phi_T^\eps(x)$. While the forward problem is to compute $\phi^\eps_T$ for every given $\phi_\rmI^\eps$, the inverse problem is to use $(\phi^\eps_\rmI,\phi^\eps_T)$ data pairs to reconstruct $V$. In other words, denoting
\begin{equation*}
\mathcal{M}^\eps_\rmS[V]\,:\;\phi^\eps_\rmI\to\phi^\eps_T\,,
\end{equation*}
the inverse problem is to use the map $\mathcal{M}^\eps_\rmS[V]$ to reconstruct $V$.

\begin{remark}
The reconstruction is at most unique up to a gauge transform. Indeed, let $\hat{H}^\eps$ be the Hamiltonian operator~\eqref{eqn:pseudo_def}, and define $H_n^\eps$ to be a new Hamiltonian operator
\begin{equation*}
\hat{H}_n^\eps = \hat{H}^\eps + \frac{2\pi\eps}{T}n \,, \quad n\in\Zb\,.
\end{equation*}
We further define the unitary semi-group generated by $\hat{H}^\eps$ and $\hat{H}_n^\eps$:
\begin{equation*}
U^\eps(t) = e^{-\ri t \hat{H}^\eps/\eps}\,,\quad U_n^\eps(t) = e^{-\ri t \hat{H}_n^\eps/\eps}\,,\quad t>0\,.
\end{equation*}
Clearly the two Hamiltonian operators are different, but $U^\eps(T) = U_n^\eps(T)$, for all $n\in\Zb$. This suggests that the initial-to-final map
\begin{equation*}
\mathcal{M}^\eps_\rmS[V] = \mathcal{M}^\eps_\rmS\left[V+\frac{2\pi\eps}{T}n\right]\,,
\end{equation*}
and thus the reconstruction cannot be unique.
\end{remark}

To derive the linearized version of the inverse problem, we assume there is a known background potential term $V_\rmb(x)$ such that
\begin{equation*}
\Vp(x) = V(x)-V_\rmb(x)
\end{equation*}
is much smaller than $V_b(x)$ in amplitude. We further write the background problem with the same initial condition:
\begin{equation}\label{eqn:schr_background}
\begin{aligned}
\ri\eps\partial_t\phi_\rmb^\eps = -\half\eps^2\Delta_x\phi_\rmb^\eps + V_\rmb(x)\phi_\rmb^\eps\,,\\
\phi_\rmb^\eps(0,x) = \phi^\eps_\rmI(x)\,.
\end{aligned}
\end{equation}
For a preset $V_\rmb$ and $\phi^\eps_\rmI(x)$, one can compute the equation for $\phi^\eps_\rmb(T,x) = \phi^\eps_{\rmb,T}(x)$.

Let $\phip = \phi^\eps-\phi_b^\eps$ be the perturbation of wave $\phi^\eps$, then by subtracting the equation~\eqref{eqn:schr_full} from~\eqref{eqn:schr_background} and omitting the higher order term $\Vp\phip$, we get the equation for the perturbation $\phip$
\begin{equation}\label{eqn:schr_perturbed}
\begin{aligned}
\ri\eps\partial_t\phip = -\half\eps^2\Delta_x\phip + V_\rmb(x)\phip + \Vp(x)\phi_\rmb^\eps\,,\\
\phip(0,x) = 0\,.
\end{aligned}
\end{equation}
Note that $\phip$ has trivial initial data and implicitly depends on the initial condition $\phi_\rmI^\eps$ through the background wave $\phi_\rmb^\eps$. Knowing the measured data $\phi^\eps_T(x)$, and the computed data $\phi^\eps_{\rmb,T}(x)$, we merely take the difference and define
\begin{equation}\label{eqn:schr_measured}
\Bp=\phip(T,x) = \phi^\eps_T(x)-\phi^\eps_{\rmb,T}(x)\,.
\end{equation}
The inverse problem now translates to reconstructing $\Vp$ using $(\phi_\rmI^\eps\,,\Bp)$ data pairs. To do so, we formulate the adjoint equation $\psi^\eps$ that solves:
\begin{equation}\label{eqn:schr_conjugate}
\begin{aligned}
\ri\eps\partial_t\psi^\eps = -\half\eps^2\Delta_x\psi^\eps + V_\rmb(x)\psi^\eps\,,\\
\psi^\eps(T,x) = \psi_T^\eps(x)\,,
\end{aligned}
\end{equation}
where the data is given at the final time $T$.

Taking $\mbox{\eqref{eqn:schr_perturbed}}\times\overline{\psi^\eps}-\overline{\mbox{\eqref{eqn:schr_conjugate}}}\times\phip$, we arrive at
\begin{equation*}
\ri\eps\partial_t(\phip\overline{\psi^\eps}) = -\half\eps^2(\overline{\psi^\eps}\Delta_x\phip - \phip\Delta_x\overline{\psi^\eps}) + \Vp(x)\phi_\rmb^\eps \overline{\psi^\eps}\,.
\end{equation*}
We integrate the equation over $\Rb^d\times[0,T]$, and then apply the Green's identity and make use of the trivial initial data for $\phip$. This finally yields our problem formulation
\begin{equation}\label{eqn:schr_integral}
\int_{\Rb^d}\Bp\overline{\psi^\eps_T}\rmd x = \frac{1}{\ri\eps}\int_{\Rb^d}\Vp(x) \int_0^T\phi_\rmb^\eps\overline{\psi^\eps} \rmd t\rmd x= \int_{\Rb^d}\Vp(x) R_\rmS^\eps[\phi_{\rmI}^\eps,\psi_T^\eps](x)\rmd x \,,
\end{equation}
where we call the representative:
\begin{equation}\label{eqn:schr_rep}
R_\rmS^\eps[\phi_{\rmI}^\eps,\psi_T^\eps] = \frac{1}{\ri\eps}\int_0^T\phi_\rmb^\eps\overline{\psi^\eps} \rmd t \,.
\end{equation}
Note that the left hand side of~\eqref{eqn:schr_integral} is known, with $\psi_T^\eps$ given in~\eqref{eqn:schr_conjugate} and $\phip_T$ calculated from the measured data~\eqref{eqn:schr_measured}. The right hand side formulates a Fredholm integral on the unknown $\Vp$ and the kernel $R^\eps_\rmS$. Reconstruction of $\Vp$ amounts to inverting this Fredholm integral using different configurations of $R_\rmS^\eps$, which, in turn, are tuned by $(\phi_{\rmI}^\eps\,,\psi^\eps_T)$ data pairs.

\subsection{A linearized inverse problem for the Wigner equation}\label{sec:lin_inv_Wigner}
The counterpart of the Schr\"odinger equation on the phase space is the Wigner equation. We derive the linearized inverse problem for this equation assuming initial and final states are given. Recall the Wigner equation in $\Rb^{2d}$:
\begin{equation}\label{eqn:wigner_full}
\partial_t f^\eps + k\cdot\nabla_x f^\eps = \Lc_{V}^\eps[f^\eps]\,,
\end{equation}
with
\begin{equation*}
\Lc_{V}^\eps[f^\eps] = \ri\int_{\Rb^{2d}} e^{\ri p(x-y)} V(y) \frac{1}{\eps}\left[f^\eps\left(x,k+\half\eps p\right)-f^\eps\left(x,k-\half\eps p\right)\right] \rmd p \rmd y\,.
\end{equation*}

Let the initial data be $f^\eps(0,x,k) = f_\rmI^\eps(x,k)$, and we call the final time data $f^\eps(T,x,k) = f_T^\eps(x,k)$. The goal is to use initial-final data pairs $(f_\rmI^\eps\,,f^\eps_T)$ to reconstruct the potential term $V$. This amounts to using the following operator to reconstruct $V$:
\begin{equation*}
\mathcal{M}^\eps_\rmW[V]\,:\;f_\rmI^\eps\to f^\eps_T\,.
\end{equation*}

\begin{remark}
According to the definition of $\Lc_V$, it is immediate that $\Lc_{V}^\eps = \Lc_{V+C}^\eps$ where $C$ can be any constant. This makes $\mathcal{M}^\eps_\rmW[V] = \mathcal{M}^\eps_\rmW[V+C]$. Therefore the reconstruction can be at most unique up to an unknown constant.
\end{remark}

To derive the linear inverse problem, we assume that there is a background potential $V_\rmb(x)$ so that $\Vp(x) = V(x)-V_\rmb(x)$ is much smaller than $V_\rmb(x)$ in amplitude. Call the background problem with the same initial condition:
\begin{equation}\label{eqn:wigner_background}
\begin{aligned}
\partial_t f_\rmb^\eps + k\cdot\nabla_x f_\rmb^\eps = \Lc_{V_\rmb}^\eps[f_\rmb^\eps]\,,\\
f_\rmb^\eps(0,x,k) = f_\rmI^\eps(x,k)\,,
\end{aligned}
\end{equation}
where the operator $\Lc_{V_\rmb}$ is defined by the background potential. With a preset $V_\rmb$ and $f_\rmI^\eps$, $f^\eps_{\rmb,T}(x,k)=f_\rmb^\eps(T,x,k)$ can be pre-computed.

Define $\fp = f^\eps - f_\rmb^\eps$, we subtract~\eqref{eqn:wigner_background} from~\eqref{eqn:wigner_full}, and drop the higher term $\Lc_{\widetilde{V}}^\eps[\fp]$ to have the equation for $\fp$:
\begin{equation}\label{eqn:wigner_perturbed}
\begin{aligned}
\partial_t \fp + k\cdot\nabla_x \fp = \Lc_{V_\rmb}^\eps[\fp] + \Lc_{\widetilde{V}}^\eps[f_b^\eps]\,,\\
\fp(0,x,k) = 0\,.
\end{aligned}
\end{equation}

This equation describes the dynamics of the perturbed data $\fp$. It has trivial initial data, and implicitly depends on $f^\eps_\rmI$ through the $\Lc_{\widetilde{V}}^\eps[f_\rmb^\eps]$ term. Since $f^\eps_T(x,k)$ is the measured data and $f_{\rmb,T}^\eps(x,k)$ is precomputed, the perturbed equation's final data is also known:
\begin{equation*}
\fp_T(x,k) = \fp(T,x,k)=f^\eps(T,x,k) - f_\rmb^\eps(T,x,k) = f^\eps_T(x,k) - f^\eps_{\rmb,T}(x,k)\,.
\end{equation*}

As was done in the case of the Schr\"odinger equation, we also derive the adjoint equation for $g^\eps$:
\begin{equation}\label{eqn:wigner_adjoint}
\begin{aligned}
\partial_t g^\eps + k\cdot\nabla_x g^\eps = \Lc_{V_\rmb}^\eps[g^\eps]\,,\\
g^\eps(T,x,k) = g_T^\eps(x,k)\,,
\end{aligned}
\end{equation}
with the data imposed at the final time $t=T$.

Taking $\mbox{\eqref{eqn:wigner_perturbed}}\times \overline{g^\eps}+\overline{\mbox{\eqref{eqn:wigner_adjoint}}}\times\fp$, we arrive at
\begin{equation}
\partial_t (\fp\overline{g^\eps}) + \nabla_x\cdot(k\fp\overline{g^\eps}) = \overline{g^\eps} \Lc_{V_\rmb}^\eps[\fp] + \fp\overline{\Lc_{V_\rmb}^\eps[g^\eps]} + \overline{g^\eps} \Lc_{\widetilde{V}}^\eps[f_\rmb^\eps] \,.
\end{equation}
We integrate the equation over $\Rb^{2d}\times[0,T]$. Making use of the anti-self-adjointness of $\Lc_{V_\rmb}^\eps$, as shown in~\eqref{eqn:self_ajoint_L}, and the trivial initial data of $\fp$, we obtain:
\begin{equation*}\label{eqn:wigner_integral1}
\int_{\Rb^{2d}}\widetilde{f}_T^\eps \overline{g_T^\eps}\rmd x \rmd k
= \int_{\Rb^{2d}\times[0,T]}\overline{g^\eps} \Lc_{\Vp}^\eps[f_\rmb^\eps]\rmd x \rmd k \rmd t\,.
\end{equation*}

We note that the integral term on the right hand side of the equation is a linear operator on $\Vp$. To do so, we expand $\Lc_{\Vp}^\eps$, and employ~\eqref{eqn:col_Deps}:
\begin{equation}\label{eqn:wigner_integral}
\int_{\Rb^{2d}}\widetilde{f}_T^\eps \overline{g_T^\eps}\rmd x \rmd k = \int_{\Rb^{d}} \Vp(x)R_\rmW^\eps[f_{\rmI}^\eps,g_T^\eps](x)\rmd x\,,
\end{equation}
with the representative
\begin{equation}\label{eqn:wigner_rep}
R_\rmW^\eps[f_{\rmI}^\eps,g_T^\eps] = \frac{\ri}{(2\pi)^d} \int_{\Rb^{3d}\times[0,T]} e^{\ri p (z-x)} \overline{g^\eps}(z,k) D^\eps f_\rmb^\eps(z,k,p) \rmd p \rmd z \rmd k \rmd t\,,
\end{equation}
where $D^\eps f_\rmb^\eps$ is defined in~\eqref{eqn:D_eps}.

Once again, the left hand side of~\eqref{eqn:wigner_integral} is known, with $\fp_T$ computed and $g_T^\eps$ given, and the right hand side of~\eqref{eqn:wigner_integral} is a Fredholm integral on $\Vp$ with the kernel $R_\rmW^\eps$. The linear inverse problem of the Wigner equation amounts to inverting such an integral. By choosing different configurations of $(f_{\rmI}^\eps,g_T^\eps)$, we obtain different profiles of $R^\eps_\rmW$, using which, we try to reconstruct $\Vp$.

\subsection{A linearized inverse problem for the Liouville equation}\label{sec:lin_inv_Liouville}
Finally, we derive the inverse problem for the Liouville equation. Recall the Liouville equation in $\Rb^{2d}$
\begin{equation}\label{eqn:liouville_full}
\partial_t f + k\cdot\nabla_x f - \nabla_x V \cdot \nabla_k f = 0\,.
\end{equation}
Denote the initial data to be $f(0,x,k) = f_\rmI(x,k)$, and we assume one can experimentally measure the final time solution at $t=T$ for $f_T(x,k)=f(T,x,k)$. The goal is to reconstruct $V$ in the Liouville equation using the initial-to-final data pairs $(f_\rmI\,,f_T)$. Namely, to use the following operator to reconstruct $V$:
\begin{equation*}
\mathcal{M}^\eps_\rmL[V]\,:\;f_\rmI \to f_T\,.
\end{equation*}
\begin{remark}
Since the potential term enters the equation~\eqref{eqn:liouville_full} through its gradient $\nabla_x V$, $\mathcal{M}^\eps_\rmL[V] = \mathcal{M}^\eps_\rmL[V+C]$ for any constant $C$. This means the reconstruction of $V$ is at most unique up to an arbitrary constant.
\end{remark}

As was done in the previous sections, the problem shall be linearized around a background potential $V_\rmb(x)$. The background equation with the same initial data writes:
\begin{equation}\label{eqn:liouville_background}
\begin{aligned}
\partial_t f_\rmb + k\cdot\nabla_x f_\rmb - \nabla_x V_\rmb \cdot \nabla_k f_\rmb = 0\,,\\
f_\rmb(0,x,k) = f_\rmI(x,k)\,.
\end{aligned}
\end{equation}
Denoting the perturbation $\widetilde{f} = f - f_\rmb$, we subtract~\eqref{eqn:liouville_background} from~\eqref{eqn:liouville_full}, and drop the higher order term $\nabla_x \Vp \cdot \nabla_k \widetilde{f}$ to obtain the equation for the perturbation:
\begin{equation}\label{eqn:liouville_perturbed}
\begin{aligned}
\partial_t \widetilde{f} + k\cdot\nabla_x \widetilde{f} - \nabla_x V_\rmb \cdot \nabla_k \widetilde{f} = \nabla_x \widetilde{V} \cdot \nabla_k f_\rmb\,,\\
\widetilde{f}(0,x,k) = 0\,.
\end{aligned}
\end{equation}
The equation has trivial initial data, but it implicitly depends on $f_\rmI$ through the $f_\rmb$ term that enters as the source. Since $f_T(x,k)$ is measured, and $f_\rmb(T,x,k)$ can be precomputed for any given $V_\rmb$, we easily obtain:
\begin{equation*}
\widetilde{f}_T = \widetilde{f}(T,x,k) =  f(T,x,k) - f_\rmb(T,x,k)
\end{equation*}
as a known quantity.

The adjoint equation for $g$ is:
\begin{equation}\label{eqn:liouville_adjoint}
\begin{aligned}
\partial_t g + k\cdot\nabla_x g - \nabla_x V_\rmb \cdot \nabla_k g = 0\,,\\
g(T,x,k) = g_T(x,k)\,.
\end{aligned}
\end{equation}

Taking $\mbox{\eqref{eqn:liouville_perturbed}}\times \overline{g}+\overline{\mbox{\eqref{eqn:liouville_adjoint}}}\times\widetilde{f}$, we arrive at
\begin{equation*}
\partial_t (\widetilde{f}\overline{g}) + \nabla_x\cdot (k\widetilde{f}\overline{g}) -   \nabla_k \cdot (\widetilde{f}\overline{g} \nabla_x V_\rmb) = \overline{g}\nabla_x \widetilde{V} \cdot \nabla_k f_\rmb \,.
\end{equation*}
We integrate the equation over $\Rb^{2d}\times[0,T]$, and make use of the trivial initial data for $\widetilde{f}$:
\begin{equation*}
\int_{\Rb^{2d}}\widetilde{f}_T \overline{g}_T\rmd x \rmd k
= \int_{\Rb^{2d}\times[0,T]}\overline{g} \nabla_x\widetilde{V}\cdot\nabla_k f_b\rmd x \rmd k \rmd t\,.
\end{equation*}
Moving the $\nabla_x$ from $V$ to $\overline{g}\nabla_k f_\rmb$, this becomes
\begin{equation}\label{eqn:liouville_integral}
\int_{\Rb^{2d}}\widetilde{f}_T \overline{g}_T\rmd x \rmd k = \int_{\Rb^d}\widetilde{V}(x) R_\rmL[f_{\rmI},g_T](x)\rmd x\,,
\end{equation}
where the Liouville representative is defined as:
\begin{equation}\label{eqn:liouville_rep}
R_\rmL[f_{\rmI},g_T] = -\nabla_x\cdot\int_{\Rb^d\times[0,T]}\overline{g}\nabla_k f_\rmb \rmd k \rmd t \,.
\end{equation}
Again, the left hand side of~\eqref{eqn:liouville_integral} is known, and the right hand side of~\eqref{eqn:liouville_integral} is a Fredholm integral of $\Vp$ with the kernel $R_\rmL$. The linear inverse problem of the Liouville equation is to invert such an integral.

\section{Connecting the three inverse problems}\label{sec:connecting}
The Schr\"odinger equation, the Wigner equation and the Liouville equation are connected. According to Lemma~\ref{lem:wigner} and Theorem~\ref{thm:semiclassical_limit}, $f^\eps = W^\eps[\phi^\eps]$ necessarily satisfies the Wigner equation as long as $\phi^\eps$ solves the Schr\"odinger equation, and when $\eps\to0$, $f^\eps\to f$ that solves the Liouville equation.

We look for the counterparts of these relations in the inverse setting. This is to investigate the three inverse problems introduced in Section~\ref{sec:inverse}. More specifically, since the three inverse problems are uniquely represented by the three representatives $R_\rmS^\eps$, $R_\rmW^\eps$ and $R_\rmL$, as defined in~\eqref{eqn:schr_rep},~\eqref{eqn:wigner_rep} and~\eqref{eqn:liouville_rep} respectively, we essentially need to show the connections between them.

\subsection{From Schr\"odinger to Wigner in the inverse setting}\label{sec:schr_to_wigner}
This is to study the relation between the linear Schr\"odinger inverse problem~\eqref{eqn:schr_integral} and the linear Wigner inverse problem~\eqref{eqn:wigner_integral}. For simplicity of notations, we drop the $\eps$ superscript throughout the subsection. The theorem below demonstrates that every Wigner representative $R_\rmW$ can be written as a linear combination of Schr\"odinger representatives $R_\rmS$. This means the space spanned by all $R_\rmW$ is a subspace spanned by all $R_\rmS$.

\begin{theorem}
Let $\phi_\rmb(t)$ and $\phi_\rmb'(t)$ be the solutions to the background Schr\"odinger equation~\eqref{eqn:schr_background} with initial data $\phi_\rmI$ and $\phi_\rmI'$ respectively. Let $\psi(t)$ and $\psi'(t)$ be the solutions to the adjoint Sch\"odinger equation~\eqref{eqn:schr_conjugate} with final data $\psi_T$ and $\psi_T'$ respectively. More over, let $f_\rmI = W[\phi_\rmI,\psi'(0)]$ and $g_T = W[\psi_T,\phi_\rmb'(T)]$. Then

\begin{equation}\label{eqn:wig_rep_to_schr_rep}
  (2\pi\eps)^d R_\rmW[f_\rmI,g_T]
=  \inprod{\phi_\rmI'}{\psi'(0)}{} R_\rmS[\phi_\rmI,\psi_T]
- \inprod{\phi_\rmI}{\psi(0)}{} R_\rmS[\phi_\rmI',\psi_T'] \,.
\end{equation}

\end{theorem}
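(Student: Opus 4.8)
The plan is to push everything back to the Schr\"odinger picture and then reduce $R_\rmW$ to an integral over position space. The starting observation is that the two Wigner solutions entering $R_\rmW$ are themselves Wigner transforms of pairs of background Schr\"odinger solutions: the adjoint Schr\"odinger equation~\eqref{eqn:schr_conjugate} \emph{is} the background equation~\eqref{eqn:schr_background}, so $\psi(t)$ and $\psi'(t)$ also solve the background Schr\"odinger equation on $[0,T]$, whence by Lemma~\ref{lem:wigner} both $W[\phi_\rmb,\psi']$ and $W[\psi,\phi_\rmb']$ solve the background Wigner equation. Comparing data at $t=0$ and $t=T$ respectively (and using uniqueness) identifies the background Wigner solution in $R_\rmW$ as $f_\rmb(t)=W[\phi_\rmb(t),\psi'(t)]$ and the adjoint Wigner solution as $g(t)=W[\psi(t),\phi_\rmb'(t)]$; combined with $\overline{W[\phi_1,\phi_2]}=W[\phi_2,\phi_1]$ this gives $\overline{g}=W[\phi_\rmb',\psi]$.

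Next I would use the \emph{weak} form of $R_\rmW$ rather than the kernel~\eqref{eqn:wigner_rep}: the derivation of~\eqref{eqn:wigner_integral} shows $\int_{\Rb^d}\Vp(x)\,R_\rmW[f_\rmI,g_T](x)\,\rmd x=\int_0^T\!\!\int_{\Rb^{2d}}\overline{g}\,\Lc_{\Vp}^\eps[f_\rmb]\,\rmd x\,\rmd k\,\rmd t$ for every admissible perturbation $\Vp$. The one computational identity needed is the action of $\Lc^\eps_V$ on a Wigner transform, obtained by substituting the integral representation of $W[\phi_1,\phi_2]$ into~\eqref{eqn:wigner_Lc} and doing the $p$-integral (equivalently, it is the potential contribution in the derivation of Lemma~\ref{lem:wigner}): $\Lc_V^\eps[W[\phi_1,\phi_2]](x,k)=\ri\invpi\int_{\Rb^d}\delta^\eps[V](x,z)\,e^{\ri zk}\,\phi_1(x-\epsh z)\,\overline{\phi_2}(x+\epsh z)\,\rmd z$. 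Feeding in $\overline g=W[\phi_\rmb',\psi]$ and $\Lc_{\Vp}^\eps[f_\rmb]=\Lc_{\Vp}^\eps[W[\phi_\rmb,\psi']]$, each carries an oscillatory factor ($e^{\ri ky}$ and $e^{\ri zk}$), and the $k$-integration yields $\int_{\Rb^d}e^{\ri k(y+z)}\,\rmd k=(2\pi)^d\delta(y+z)$, which collapses the two oscillatory integrals into one by forcing $z=-y$ (so $\delta^\eps[\Vp](x,z)$ becomes $\delta^\eps[\Vp](x,-y)=\tfrac1\eps[\Vp(x-\epsh y)-\Vp(x+\epsh y)]$). The decisive step is then the change of variables $(x,y)\mapsto(a,b)=(x-\epsh y,\,x+\epsh y)$, with Jacobian $\rmd x\,\rmd y=\eps^{-d}\rmd a\,\rmd b$: the integrand becomes $\tfrac1\eps\bigl[\Vp(a)-\Vp(b)\bigr]\,\phi_\rmb'(t,a)\overline{\psi'}(t,a)\,\phi_\rmb(t,b)\overline{\psi}(t,b)$, which factorizes into an $a$-dependent and a $b$-dependent piece.

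In the $\Vp(b)$ term the $a$-integral is $\int_{\Rb^d}\phi_\rmb'(t,a)\overline{\psi'}(t,a)\,\rmd a=\inprod{\phi_\rmb'(t)}{\psi'(t)}{\LdC}$, which by the conservation law~\eqref{eqn:inprod_conserve} for background solutions equals $\inprod{\phi_\rmI'}{\psi'(0)}{\LdC}$, while the leftover $\tfrac1\eps\int_0^T\phi_\rmb(t,b)\overline{\psi}(t,b)\,\rmd t$ equals $\ri\,R_\rmS[\phi_\rmI,\psi_T](b)$ by~\eqref{eqn:schr_rep}; symmetrically the $\Vp(a)$ term produces $\inprod{\phi_\rmI}{\psi(0)}{\LdC}$ and $\ri\,R_\rmS[\phi_\rmI',\psi_T'](a)$. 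Collecting the constants — an overall $\invpi\,\eps^{-d}$ from the $k$-integration and the Jacobian, together with the $\ri$'s from the $\Lc^\eps$-formula and from the definition of $R_\rmS$, and the relative minus from $\Vp(a)-\Vp(b)$ — one obtains $\int_{\Rb^d}\Vp(x)R_\rmW(x)\,\rmd x=(2\pi\eps)^{-d}\int_{\Rb^d}\Vp(x)\bigl[\inprod{\phi_\rmI'}{\psi'(0)}{}R_\rmS[\phi_\rmI,\psi_T](x)-\inprod{\phi_\rmI}{\psi(0)}{}R_\rmS[\phi_\rmI',\psi_T'](x)\bigr]\,\rmd x$, and since $\Vp$ is arbitrary, \eqref{eqn:wig_rep_to_schr_rep} follows.

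The main difficulty is bookkeeping rather than anything conceptual: keeping track of the powers of $2\pi$ and $\eps$ through the $k$-integration and the $(a,b)$-substitution, and keeping the signs and the factors of $\ri$ straight (in $\delta^\eps[\Vp](x,-y)=-\delta^\eps[\Vp](x,y)$, and when passing between $\tfrac1\eps\int_0^T\phi_\rmb\overline\psi\,\rmd t$ and $R_\rmS=\tfrac1{\ri\eps}\int_0^T\phi_\rmb\overline\psi\,\rmd t$). One should also make the formal Fubini and delta-function manipulations rigorous, e.g. by taking $\phi_\rmI,\phi_\rmI',\psi_T,\psi_T'$ in a suitable Schwartz-type class so that all the integrals converge absolutely, which is consistent with the regularity hypotheses used elsewhere in the paper.
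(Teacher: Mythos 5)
Your proof is correct and is essentially the paper's own argument organized in dual form: the paper substitutes $f_\rmb=W[\phi_\rmb,\psi']$ and $g=W[\psi,\phi_\rmb']$ (via Lemma~\ref{lem:wigner}) directly into the kernel formula~\eqref{eqn:wigner_rep} and uses two Fourier inversions together with the substitution $z'=z+\epsh q$, $z''=z-\epsh q$ to localize at $x$, whereas you pair against an arbitrary $\Vp$, let the $k$-integration produce $(2\pi)^d\delta(y+z)$, and split with $(a,b)=(x-\epsh y,\,x+\epsh y)$ — the same computation. Your bookkeeping of the Jacobian $\eps^{-d}$, the factors of $\ri$, the sign from $\delta^\eps[\Vp](x,-y)$, and the use of the conservation law~\eqref{eqn:inprod_conserve} all check out against~\eqref{eqn:wig_rep_to_schr_rep}.
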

\begin{proof}
According to the definition of Wigner representative~\eqref{eqn:wigner_rep}, let $f_\rmb$ and $g$ solve the background and the adjoint Wigner equations, $R_\rmW[f_\rmI,g_T]$ becomes
\begin{equation}\label{eqn:R_W_I}
R_\rmW[f_\rmI,g_T] = \frac{\ri}{(2\pi)^d}\int_{0}^T I(t)\rmd t
\end{equation}
with
\begin{equation*}
I = \int_{\Rb^{3d}} e^{\ri p (z-x)} \overline{g}(z,k)  D f_\rmb(z,k,p) \rmd p \rmd z \rmd k\,.
\end{equation*}

According to~\eqref{lem:wigner}, $g(t) = W[\psi(t),\phi_\rmb'(t)]$ and $f_\rmb(t)=W[\phi_\rmb(t),\psi'(t)]$ and thus
\begin{equation*}
I =\int_{\Rb^{3d}} e^{\ri p (z-x)} \overline{W[\psi(t),\phi_\rmb'(t)]}(z,k)  D W[\phi_\rmb(t),\psi'(t)](z,k,p) \rmd p \rmd z \rmd k\,.
\end{equation*}
Here $D W[\phi_\rmb(t),\psi'(t)]$ is defined as in~\eqref{eqn:D_eps}. Plugging in the Wigner transform, we get
\begin{equation}
\begin{aligned}
I= &\frac{1}{(2\pi)^{2d}\eps} \int_{\Rb^{5d}}  e^{\ri k (q-y)} [e^{\ri p \left(z+\half\eps q -x\right)} - e^{\ri p\left( z-\half\eps q -x\right)}] \\
&\overline{\psi}\left(z-\half\eps y\right) \phi_\rmb'\left(z+\half\eps y\right) \phi_\rmb\left(z-\half\eps q\right) \overline{\psi'}\left(z+\half\eps q\right) \rmd y \rmd q \rmd p \rmd z \rmd k \\
= &\frac{1}{(2\pi)^d\eps} \int_{\Rb^{3d}}  [e^{\ri p \left(z+\half\eps q -x\right)} - e^{\ri p\left( z-\half\eps q -x\right)}] \\
&\overline{\psi}\left(z-\half\eps q\right) \phi_\rmb'\left(z+\half\eps q\right) \phi_\rmb\left(z-\half\eps q\right) \overline{\psi'}\left(z+\half\eps q\right) \rmd q \rmd p \rmd z \,,
\end{aligned}
\end{equation}
where we used the Fourier inversion formula
\begin{equation}\label{eqn:fourier_inv}
\frac{1}{(2\pi)^d} \int_{\Rb^{2d}} e^{\ri k (q-y)} h(y) \rmd y \rmd k = h(q)\,.
\end{equation}

Let $z' = z+\half\eps q, z'' = z-\half\eps q$, we get
\begin{equation}
\begin{aligned}
I&= \frac{1}{(2\pi)^d\eps^{d+1}} \int_{\Rb^{3d}}  [e^{\ri p (z' -x)} - e^{\ri p( z'' -x)}]
\overline{\psi}(z'') \phi_\rmb'(z') \phi_\rmb(z'') \overline{\psi'}(z') \rmd z' \rmd p \rmd z'' \\
&=\frac{1}{\eps^{d+1}}  [\inprod{\phi_\rmb(t)}{\psi(t)}{\LdC}   \phi_\rmb'(x)\overline{\psi'}(x) -  \inprod{\phi_\rmb'(t)}{\psi'(t)}{\LdC} \phi_\rmb(x)\overline{\psi}(x)]
\end{aligned}
\end{equation}
where we again use the Fourier inversion formula in the second equality.

According to~\eqref{eqn:inprod_conserve}, $\inprod{\phi_\rmb(t)}{\psi(t)}{}$ and $\inprod{\phi_\rmb'(t)}{\psi'(t)}{}$ are both constants independent of $t$. Using~\eqref{eqn:R_W_I} and the definition of the Schr\"odinger representative~\eqref{eqn:schr_rep}, we integrate the equation over $[0,T]$ to conclude~\eqref{eqn:wig_rep_to_schr_rep}.
\end{proof}

The unique reconstruction of $\Vp$ in~\eqref{eqn:schr_integral} and~\eqref{eqn:wigner_integral} amounts to investigating the dimension of the spaces $R_\rmS$ and $R_\rmW$ respectively. This theorem suggests that the latter space is a subspace of the former, meaning the unique reconstruction of~\eqref{eqn:wigner_integral} would indicate the unique reconstruction of~\eqref{eqn:schr_integral}.

\subsection{From Wigner to Liouville in the inverse setting}\label{sec:wigner_to_liouville}
According to Theorem~\ref{thm:semiclassical_limit}, the Liouville equation is the classical limit of the Wigner equation, meaning that $f^\eps$, which solves the Wigner equation, converges to $f$, which solves the Liouville equation, when $\eps\to 0$. We expect similar argument holds true in the inverse setting as well. This amounts to study the two representatives $R^\eps_\rmW$ and $R_\rmL$.
\begin{theorem}
Let $R_\rmW^\eps[f_{ \rmI}^\eps,g_T^\eps]$, and $R_\rmL[f_{ \rmI},g_T]$ be the representatives defined in~\eqref{eqn:wigner_rep} and~\eqref{eqn:liouville_rep} respectively, where
\begin{equation*}
f_{ \rmI} = \lim_{\eps\to0}f^\eps_{ \rmI}\,,\quad g_T = \lim_{\eps\to0}g^\eps_T\,,
\end{equation*}
then we claim:
\begin{equation}\label{eqn:conv_wigner_liouville}
\lim_{\eps\to0}R_\rmW^\eps[f_{ \rmI}^\eps,g_T^\eps] = R_\rmL[f_{ \rmI},g_T]\,.
\end{equation}
\end{theorem}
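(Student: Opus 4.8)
The plan is to pass to the limit $\eps\to0$ directly inside the phase-space integral~\eqref{eqn:wigner_rep} defining $R^\eps_\rmW$, using two ingredients: the semiclassical convergence of the forward and adjoint solutions, and the elementary fact that the symmetric difference operator $D^\eps$ of~\eqref{eqn:D_eps} converges to a $k$-derivative. First I would apply Theorem~\ref{thm:semiclassical_limit} twice, to the background Wigner equation~\eqref{eqn:wigner_background} and to the adjoint Wigner equation~\eqref{eqn:wigner_adjoint}, obtaining $f^\eps_\rmb\to f_\rmb$ and $g^\eps\to g$ where $f_\rmb$ and $g$ solve the background and adjoint Liouville equations~\eqref{eqn:liouville_background} and~\eqref{eqn:liouville_adjoint} with data $f_\rmI=\lim_\eps f^\eps_\rmI$ and $g_T=\lim_\eps g^\eps_T$. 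Under the standing hypothesis~\eqref{eqn:bounded_grad} on $V_\rmb$, and restricting to phase-space data that converge in a strong enough topology, this convergence can be upgraded so that $\nabla_k f^\eps_\rmb\to\nabla_k f_\rmb$ as well, locally uniformly in $(t,x,k)$.

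\textbf{The finite difference becomes a derivative.} A Taylor expansion in~\eqref{eqn:D_eps} gives $D^\eps f^\eps_\rmb(z,k,p)=p\cdot\nabla_k f^\eps_\rmb(z,k)+O(\eps)$ for each fixed $p$, so combined with the previous step $D^\eps f^\eps_\rmb(z,k,p)\to p\cdot\nabla_k f_\rmb(z,k)$, with a uniform-in-$\eps$ bound of the form $|D^\eps f^\eps_\rmb(z,k,p)|\le C|p|$ on the relevant range of the remaining variables. Splitting $\overline{g^\eps}D^\eps f^\eps_\rmb-\overline{g}\,(p\cdot\nabla_k f_\rmb)=(\overline{g^\eps}-\overline{g})D^\eps f^\eps_\rmb+\overline{g}\,(D^\eps f^\eps_\rmb-p\cdot\nabla_k f_\rmb)$ and invoking dominated convergence (the extra factor $|p|$ is harmless: after pairing against a smooth potential the $p$-integrand is weighted by the rapidly decaying $\widehat{\Vp}$), one gets
\begin{equation*}
R^\eps_\rmW[f^\eps_\rmI,g^\eps_T](x)\;\longrightarrow\;\frac{\ri}{(2\pi)^d}\int_{\Rb^{3d}\times[0,T]}e^{\ri p(z-x)}\,\overline{g}(z,k)\,\bigl(p\cdot\nabla_k f_\rmb(z,k)\bigr)\,\rmd p\,\rmd z\,\rmd k\,\rmd t\,.
\end{equation*}

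\textbf{Identifying the limit with $R_\rmL$.} It remains to perform the $p$-integral. Using $p_j\,e^{\ri p(z-x)}=\tfrac1{\ri}\partial_{z_j}e^{\ri p(z-x)}$ together with the Fourier representation of the delta distribution, $\frac{\ri}{(2\pi)^d}\int_{\Rb^d}e^{\ri p(z-x)}p\,\rmd p=\nabla_z\delta(z-x)$, so the limit above equals $\int_{\Rb^{2d}\times[0,T]}\nabla_z\delta(z-x)\cdot\overline{g}(z,k)\nabla_k f_\rmb(z,k)\,\rmd z\,\rmd k\,\rmd t$. Integrating by parts in $z$ and collapsing the delta yields exactly $-\nabla_x\cdot\int_{\Rb^d\times[0,T]}\overline{g}\,\nabla_k f_\rmb\,\rmd k\,\rmd t=R_\rmL[f_\rmI,g_T](x)$, which is~\eqref{eqn:conv_wigner_liouville}. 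To make this rigorous one pairs both sides against a test potential $\Vp\in C_c^\infty(\Rb^d)$: the $x$-integration turns $e^{-\ri px}$ into $\widehat{\Vp}(p)$, the $p$-integration then returns $-\ri\nabla_z\Vp(z)$ by Fourier inversion, and one recognizes $\int\Vp R_\rmL$ after an integration by parts in $z$ (this is the same identity read backwards from~\eqref{eqn:liouville_integral}).

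\textbf{Main obstacle.} The delicate point is not the algebra but the first two steps: the bare conclusion of Theorem~\ref{thm:semiclassical_limit} is only weak-$*$ convergence to a measure $f_\rmb$, for which neither $\nabla_k f_\rmb$ nor $R_\rmL$ literally makes sense. One therefore needs to restrict to data (for example Wigner transforms of smooth, rapidly decaying WKB or Gaussian wave-packet states) for which the background and adjoint Wigner flows stay uniformly bounded in a space of functions with one $k$-derivative and appropriate phase-space decay, so that the dominated-convergence argument applies and the limiting objects are genuine functions. Establishing such uniform-in-$\eps$ regularity and decay estimates for the Wigner equation is the real work; the passage to the limit and the Fourier manipulations are then routine.
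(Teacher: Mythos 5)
Your proposal follows essentially the same route as the paper's own (formal) proof: pass to the limit inside the integral defining $R^\eps_\rmW$ using the semiclassical convergence $f^\eps_\rmb\to f_\rmb$, $g^\eps\to g$ and the fact that $D^\eps f^\eps_\rmb\to p\cdot\nabla_k f_\rmb$, then identify the limit with $R_\rmL$ via integration by parts in $z$ and Fourier inversion. Your added remarks on dominated convergence and on the gap between weak-$*$ and strong convergence are exactly the caveats the paper itself acknowledges when it calls its argument formal, so the two proofs match in substance.
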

\begin{proof}
Suppose $f_\rmb^\eps$ solves the background Wigner equation~\eqref{eqn:wigner_background} with the initial data $f^\eps_{\rmI}$, and $f_\rmb$ solves the background Liouville equation~\eqref{eqn:liouville_background} with the initial data $f_{\rmI}$. In the semi-classical regime $\eps\rightarrow0$, by Theorem~\ref{thm:semiclassical_limit}, we know that the background wave $f_\rmb^\eps$ converges to $f_\rmb$. Thus, we have, formally,
\begin{equation}\label{eqn:conv_Df}
\begin{aligned}
\lim_{\eps\to0}D^\eps f_\rmb^\eps(z,k,p) &= \lim_{\eps\to0}\frac{1}{\eps}\left[f_\rmb^\eps\left(z,k+\half\eps p \right) - f_\rmb^\eps\left(z,k-\half\eps p \right) \right] \\
&\overset{\eps\rightarrow0}{\rightarrow} p\cdot\nabla_k f_\rmb(z,k)\,.
\end{aligned}
\end{equation}
Similarly, suppose $g^\eps$ solves the adjoint Wigner equation~\eqref{eqn:wigner_adjoint} with the final data $g_T^\eps$, and $g$ solves the adjoint Liouville equation~\eqref{eqn:liouville_adjoint} with the final data $g_T$, then the adjoint wave $g^\eps$ converges to $g$. Combining these:
\begin{equation}
\begin{aligned}
R_\rmW^\eps[f_{ \rmI}^\eps,g_T^\eps](x) &= \frac{\ri}{(2\pi)^d} \int_{\Rb^{3d}\times[0,T]} e^{\ri p (z-x)} \overline{g^\eps}(z,k) D^\eps f_\rmb^\eps(z,k,p) \rmd p \rmd z \rmd k \rmd t \\
&\overset{\eps\rightarrow0}{\rightarrow} \frac{\ri}{(2\pi)^d} \int_{\Rb^{3d}\times[0,T]} e^{\ri p (z-x)} \overline{g}(z,k) p\cdot\nabla_k f_\rmb(z,k) \rmd p \rmd z \rmd k \rmd t \,.
\end{aligned}
\end{equation}
Integrating by parts for the limit and applying the Fourier inversion formula~\eqref{eqn:fourier_inv} lead to the Liouville representative, that is, the right hand side of the last limit becomes:
\begin{equation}
\begin{aligned}
&- \frac{1}{(2\pi)^d} \int_{\Rb^{3d}\times[0,T]} e^{\ri p (z-x)} \nabla_z \cdot (\overline{g^\eps}(z,k) \nabla_k f_\rmb(z,k)) \rmd p \rmd z \rmd k \rmd t \\
=& - \nabla_x\cdot\int_{\Rb^d\times[0,T]} \overline{g}(x,k) \nabla_k f_\rmb(x,k) \rmd k \rmd t =  R_\rmL[f_{\rmI},g_T] \,,
\end{aligned}
\end{equation}
which concludes~\eqref{eqn:conv_wigner_liouville}.
\end{proof}

The proof above is formal. We assumed enough regularity for the convergence~\eqref{eqn:conv_Df}. We also need the convergence to hold true in the strong sense (in $L^2$ for example).

This theorem suggests that the inverse problem of the Wigner equation, in the classical regime with $\eps\to0$, is asymptotically equivalent to the inverse problem of the Liouville equation. This connects the Schr\"odinger equation with the Newton's law of motion in the inverse setting: if the reconstruction of the potential term using the initial-to-final data map is unique (up to a gauge transform) and stable for the Schr\"odinger equation, the same holds true for the Newton's law of motion.

\section{Numerical results}\label{sec:num_test}

As a proof of concept, we provide numerical evidences for the Wigner inverse problem~\eqref{eqn:wigner_integral} in the classical regime.

\subsection{Numerical setup}
In $(1+1)$-dimensional space, the background Wigner equation reads
\begin{equation}\label{eqn:wigner_background_1d}
\begin{aligned}
&\partial_t f_\rmb^\eps + k\partial_x f_\rmb^\eps =
 \frac{\ri}{2\pi\eps} \int_{\Rb^2}\left[ V_\rmb\left(x+\frac{\eps y}{2}\right) - V_\rmb\left(x-\frac{\eps y}{2}\right) \right] f_\rmb^\eps (x,p) e^{\ri y(k-p)}   \rmd y \rmd p\,, \\
&f_\rmb^\eps(0,x,k) = f_{\rmI}^\eps(x,k)\,,
\end{aligned}
\end{equation}
and its adjoint equation reads
\begin{equation}\label{eqn:wigner_adjoint_1d}
\begin{aligned}
&\partial_t g^\eps + k\partial_x g^\eps =
 \frac{\ri}{2\pi\eps} \int_{\Rb^2}\left[ V_\rmb\left(x+\frac{\eps y}{2}\right) - V_\rmb\left(x-\frac{\eps y}{2}\right) \right] g(x,p) e^{\ri y(k-p)}   \rmd y \rmd p\,, \\
&g^\eps(T,x,k) = g_T^\eps(x,k)\,.
\end{aligned}
\end{equation}
The corresponding Liouville equation for the background equation and the adjoint equation are
\begin{equation}\label{eqn:liouville_background_1d}
\begin{aligned}
&\partial_t f_\rmb + k\partial_x f_\rmb - \partial_x V_\rmb \partial_k f_\rmb = 0 \,, \\
&f_\rmb(0,x,k) = f_{ \rmI}(x,k)\,,
\end{aligned}
\end{equation}
and
\begin{equation}\label{eqn:liouville_adjoint_1d}
\begin{aligned}
&\partial_t g + k\partial_x g - \partial_x V_\rmb \partial_k g = 0 \,, \\
&g(T,x,k) = g_T(x,k)\,,
\end{aligned}
\end{equation}
respectively. According to the definitions~\eqref{eqn:wigner_rep} and~\eqref{eqn:liouville_rep}, the Wigner and Liouville representatives in the inverse problems are
\begin{equation*}
\begin{aligned}
&R_\rmW^\eps[f_{ \rmI}^\eps,g_T^\eps](x) \\
&= \frac{\ri}{2\pi} \int_{\Rb^3\times[0,T]} e^{\ri p (z-x)} g^\eps(z,k) \frac{1}{\eps}\left[f_\rmb^\eps\left(z,k+\frac{\eps p}{2}\right)-f_\rmb^\eps\left(z,k-\frac{\eps p}{2}\right)\right] \rmd p \rmd z \rmd k \rmd t \,,
\end{aligned}
\end{equation*}
and
\begin{equation*}
R_\rmL[f_{\rmI},g_T](x)
= -\partial_x \int_{\Rb^d\times[0,T]}  g(x,k) \partial_k f_\rmb(x,k)  \rmd k \rmd t \,,
\end{equation*}
where $f^\eps_\rmb$, $g^\eps$, $f_\rmb$ and $g$ satisfy the equations above. We are to demonstrate the relation between the two representatives as $\eps\to0$.

To set up the experiment, we choose the background potential $V_\rmb(x)$ to have a Gaussian form
\begin{equation}\label{eqn:V_b_numerics}
V_\rmb(x) = A \exp\left(-\frac{(x-a)^2}{w^2}\right)\,,
\end{equation}
and the initial and final time conditions are
\begin{equation}\label{eqn:f_bI}
f_{ \rmI}^\eps(x,k) = f_{ \rmI}(x,k) = B \exp\left(-\frac{(x-b_x)^2}{\sigma_x^2}-\frac{(k-b_k)^2}{\sigma_k^2}\right)\,,
\end{equation}
and
\begin{equation}\label{eqn:g_T}
g_T^\eps(x,k) = g_T(x,k) = C \exp\left(-\frac{(x-c_x)^2}{\delta_x^2}-\frac{(k-c_k)^2}{\delta_k^2}\right)\,.
\end{equation}

To compute the Wigner equation~\eqref{eqn:wigner_background_1d} and~\eqref{eqn:wigner_adjoint_1d}, we truncate the computational domain to $\Omega=[0,L]\times[K_1,K_2]$ and apply periodic boundary condition on $x$. The time interval is taken to be $[0,T]$. The transport term is discretized by a fifth-order WENO scheme~\cite{JiSh:1996efficient}, and the collision term is computed by the trapezoidal approximate~\cite{CaGaMaSh:2003}.

To compute the Liouville equation~\eqref{eqn:liouville_background_1d} and~\eqref{eqn:liouville_adjoint_1d}, we use the particle method. This is to solve the ODE systems of trajectories. For example, to compute~\eqref{eqn:liouville_background_1d} for $0\leq t\leq T$, we set the trajectory equation
\begin{equation*}
\dot{x} = -k\,, \quad
\dot{k} = \partial_x V_\rmb(x)\,,\quad\mbox{with}\quad x(0) = y\,,\; k(0) = p\,,
\end{equation*}
and the initial data for the particle $(y,p)$ is determined by $f_\rmI$. The final solution is thus $f_\rmb(T,y,p) = f_\rmI(x(T),k(T))$.

\subsection{Numerical examples}

In the numerical examples, we set the parameters in~\eqref{eqn:V_b_numerics} to be
\begin{equation*}
A = 1\,, \quad a = 0.25\,,\quad w = 2^{-3}\,,
\end{equation*}
and the parameters defined in~\eqref{eqn:f_bI} and~\eqref{eqn:g_T} are
\begin{equation*}
B = C = 1\,,\quad \sigma_x = \delta_x = 2^{-4}\,,\quad \sigma_k = \delta_k = 2^{-3}\,,\quad b_k = c_k = 2^{-3}\,.
\end{equation*}
For discretization, we use $\Delta x = 2^{-10}$, $\Delta k = 2^{-10}$ and $\Delta t = 2^{-10}$ in both the Wigner and the Liouville solver. In the Wigner solver, we set $L = 0.5$, $K_1 = -0.375$, $K_2 = 0.625$. The terminal time is set to be $T = 2^{-6}$.

In Figure~\ref{fig:f_contour} and Figure~\ref{fig:g_contour}, we first plot the level sets of solutions $f^\eps_\rmb$ and $g^\eps$ at different time.

\begin{figure}[tbhp]
  \centering
  \subfloat[$t = 2^{-7}$]{
  \includegraphics[width=0.45\textwidth,height = 0.18\textheight]{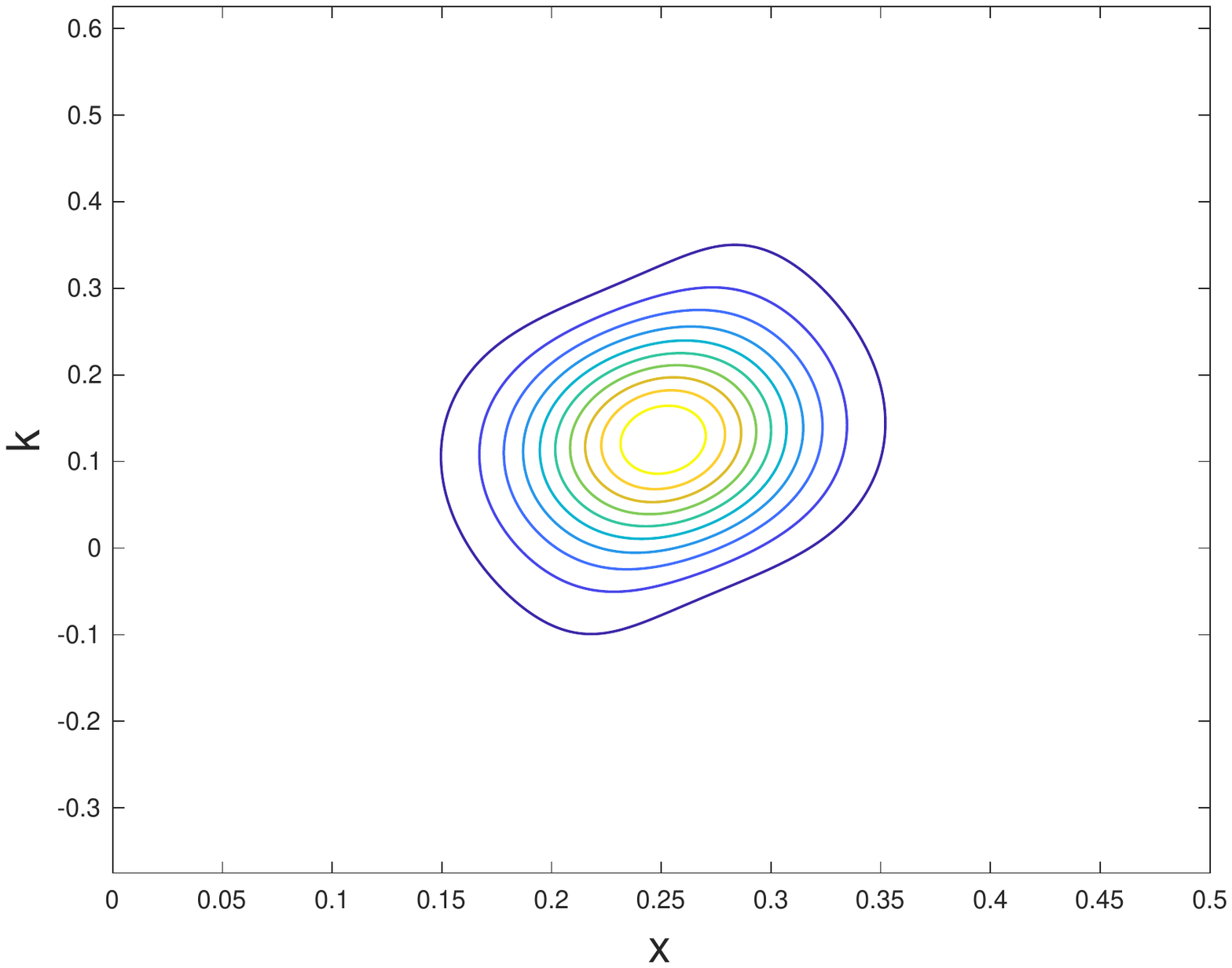}
  \includegraphics[width=0.45\textwidth,height = 0.18\textheight]{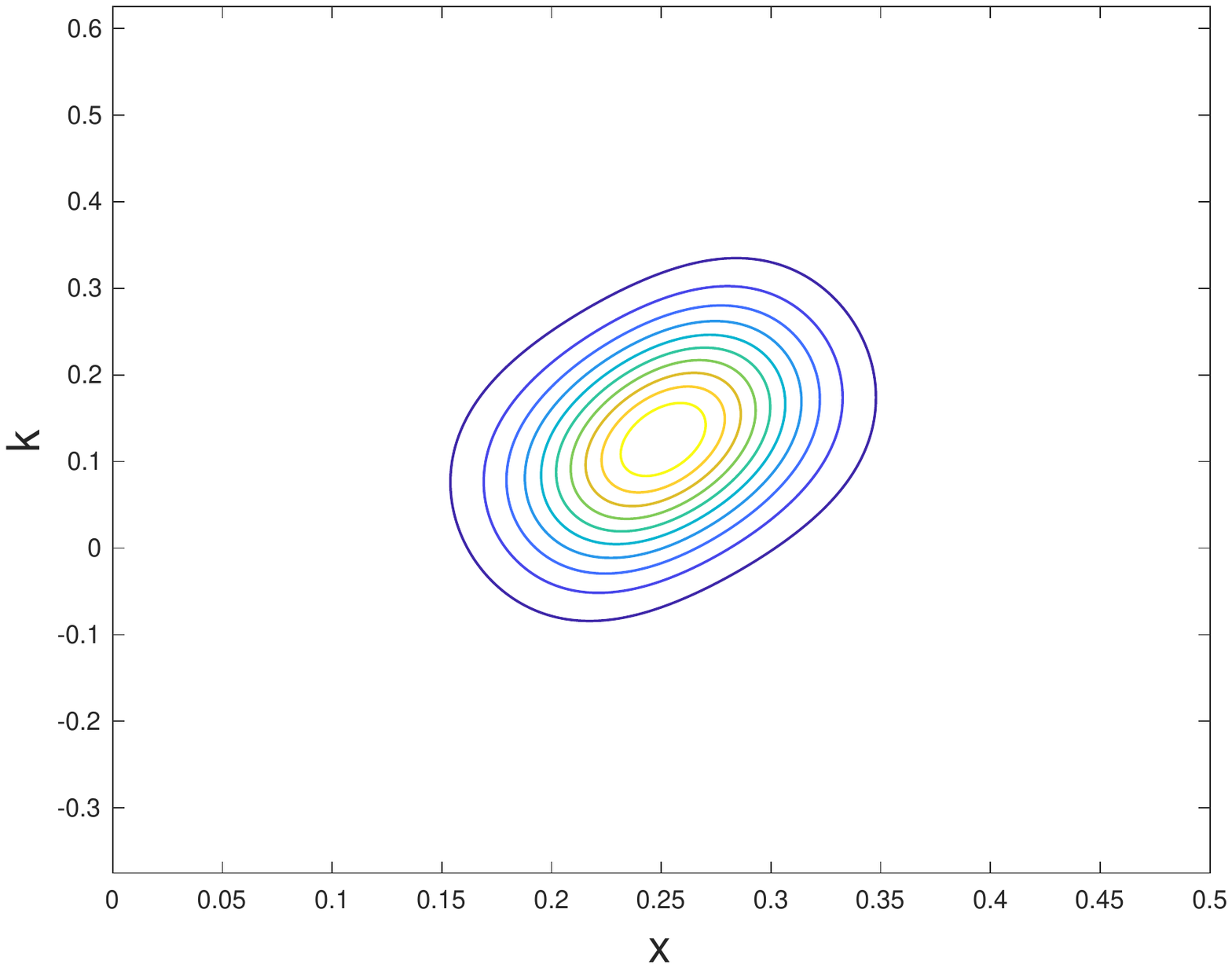}
  }

  \subfloat[$t = 2^{-6}$]{
  \includegraphics[width=0.45\textwidth,height = 0.18\textheight]{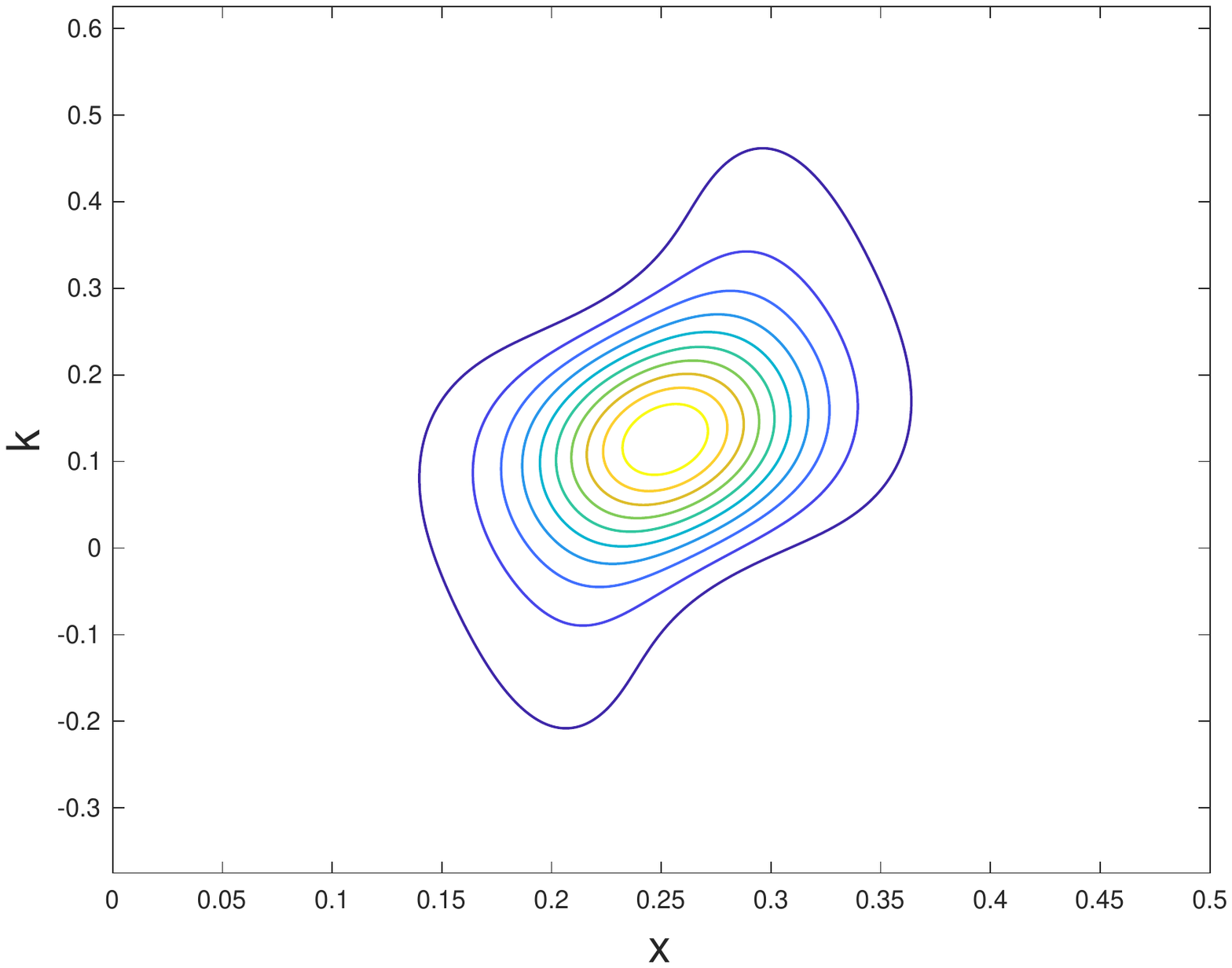}
  \includegraphics[width=0.45\textwidth,height = 0.18\textheight]{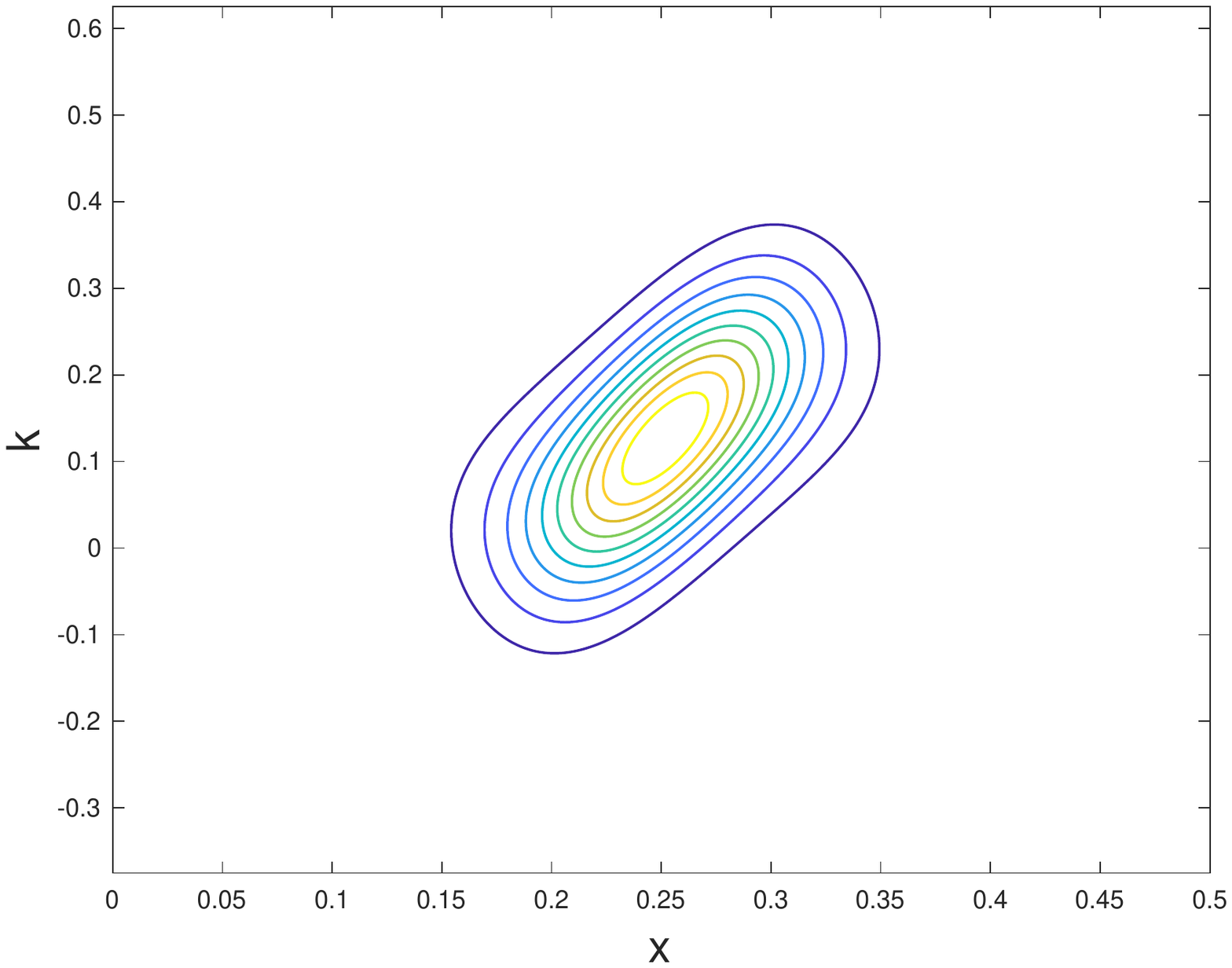}
  }

  \caption{The left column shows the contour of $f_\rmb^\eps$ for $\eps = \pi^{-1}2^{-4}$ and the right column shows the contour for $\eps = \pi^{-1}2^{-8}$.}
  \label{fig:f_contour}
\end{figure}

\begin{figure}[tbhp]
  \centering
  \subfloat[$t = 2^{-7}$]{
  \includegraphics[width=0.45\textwidth,height = 0.18\textheight]{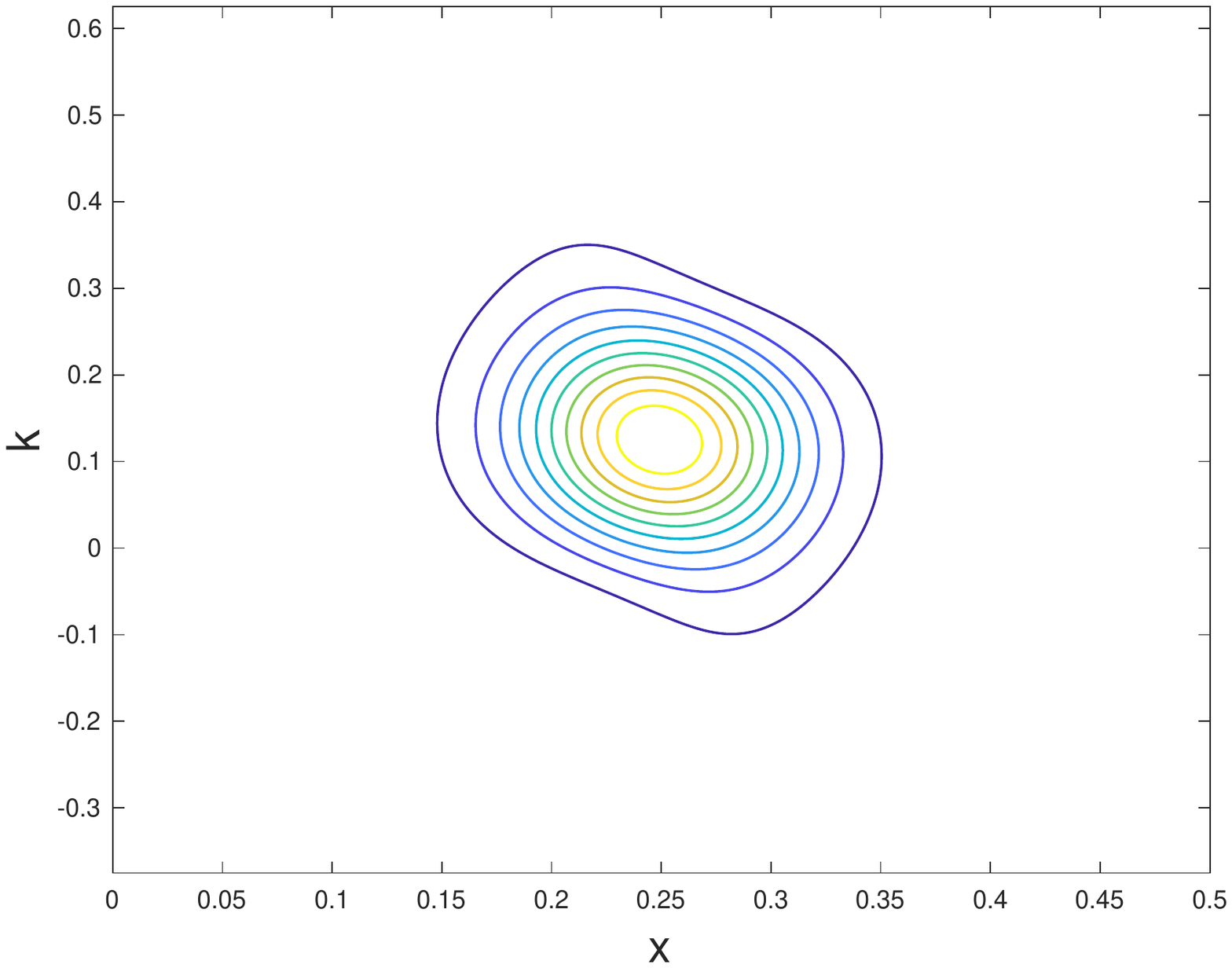}
  \includegraphics[width=0.45\textwidth,height = 0.18\textheight]{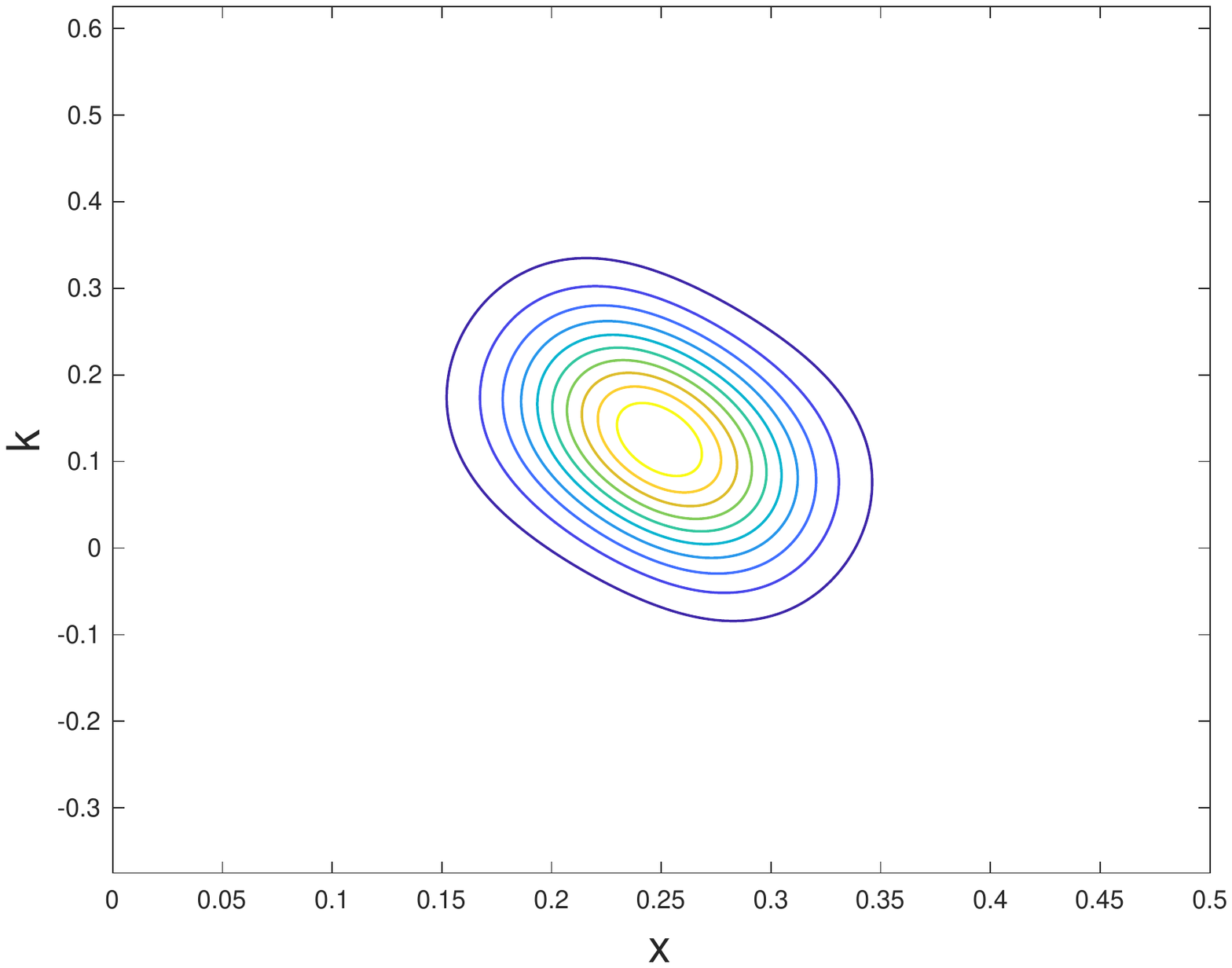}
  }

  \subfloat[$t = 0$]{
  \includegraphics[width=0.45\textwidth,height = 0.18\textheight]{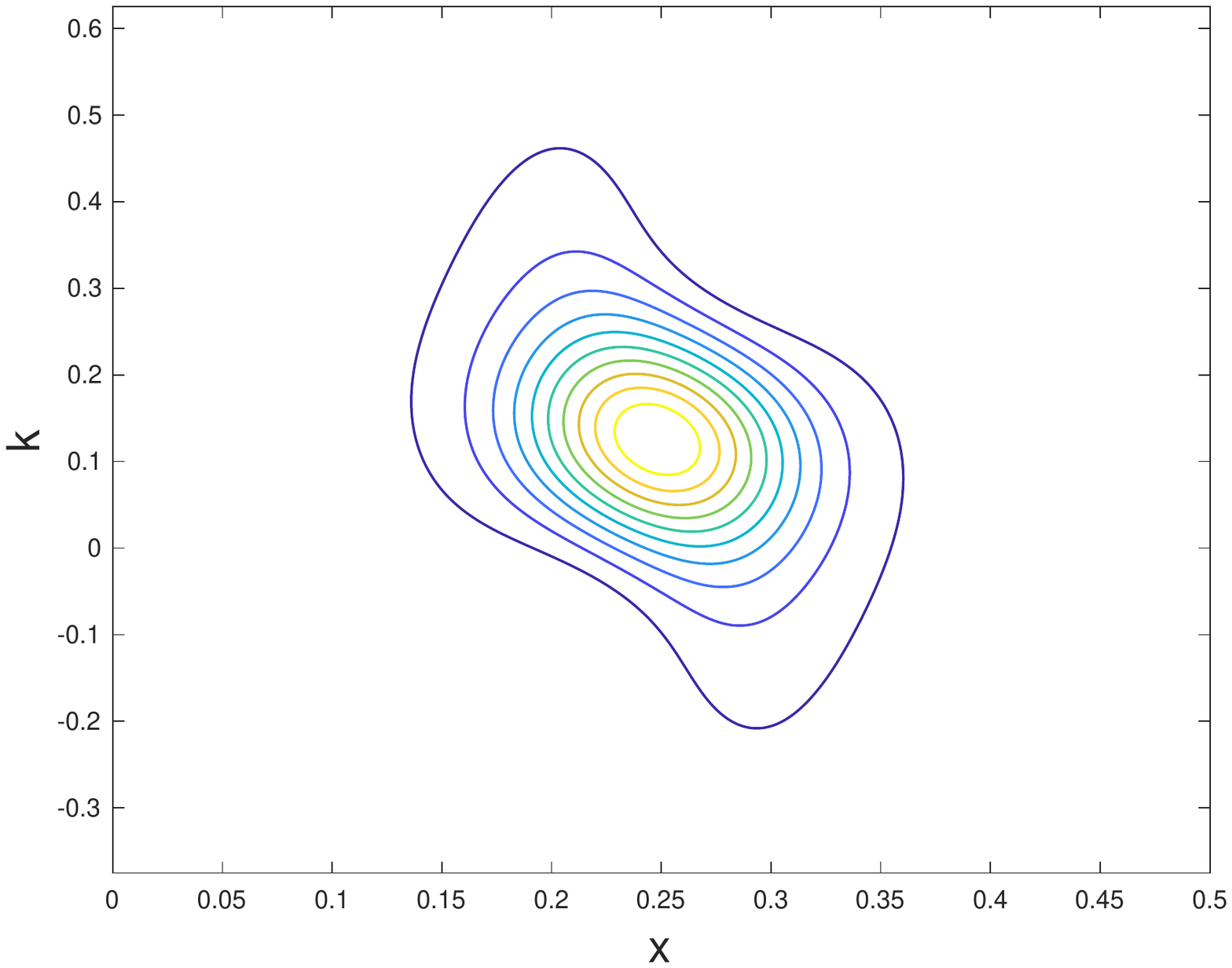}
  \includegraphics[width=0.45\textwidth,height = 0.18\textheight]{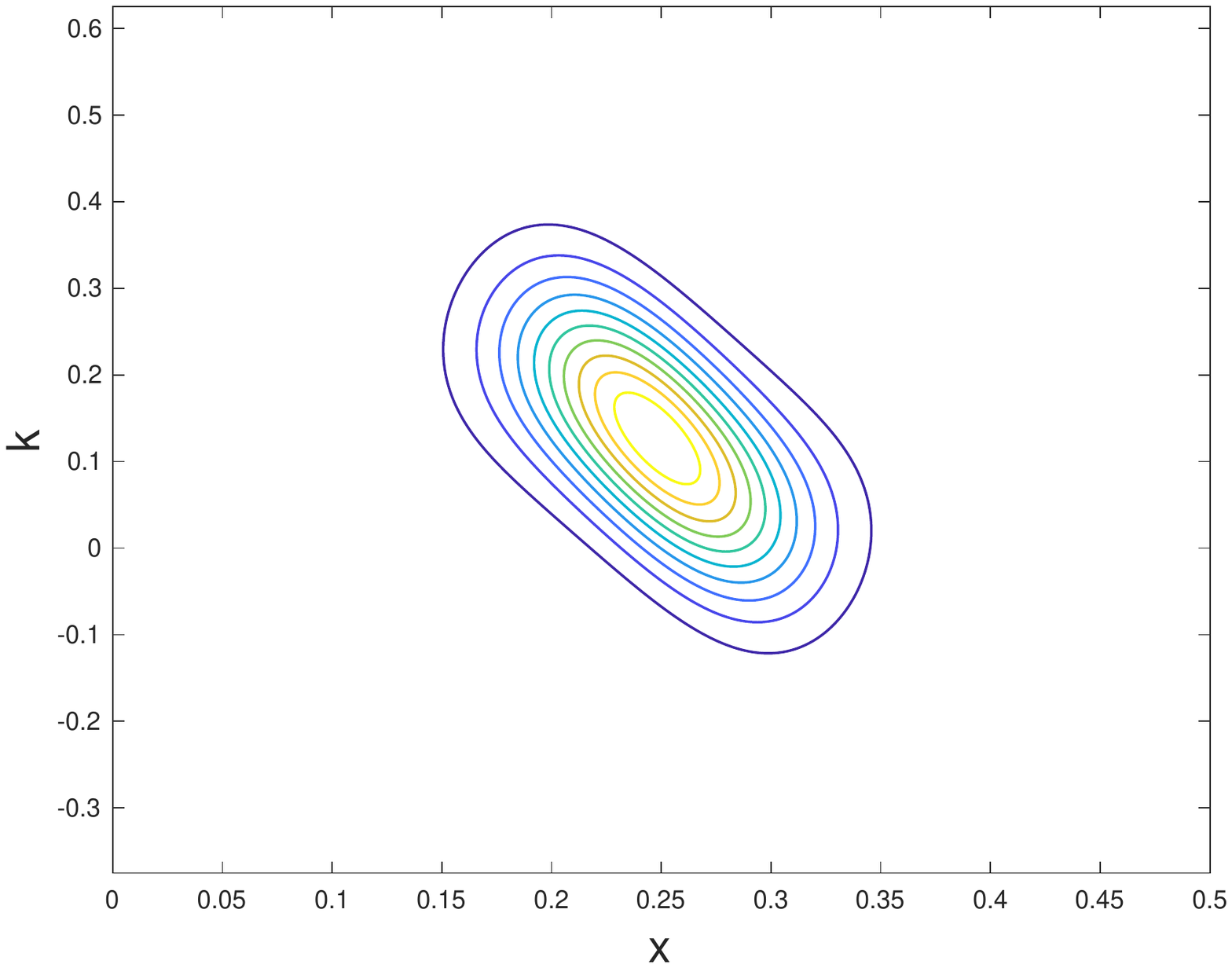}
  }

  \caption{The left column shows the contour of $g^\eps$ for $\eps = \pi^{-1}2^{-4}$ and the right column shows the contour for $\eps = \pi^{-1}2^{-8}$.}
  \label{fig:g_contour}
\end{figure}

We then compare the two representatives $R^\eps_\rmW[f_{ \rmI}^\eps,g_T^\eps]$ and $R_L[f_{ \rmI},g_T]$ for two different configurations of $(b_x,c_x)$. As shown in the left column of Figure~\ref{fig:R_err}, with $\eps\to0$, the profile of $R_\rmW^\eps[f_{ \rmI}^\eps,g_T^\eps]$ gets closer and closer to that of $R_\rmL[f_{ \rmI},g_T]$ for both examples. To quantify the convergence, we define
\begin{equation*}
\mathrm{Err}_\rmR(\eps) = \frac{\|R^\eps[f_{ \rmI}^\eps,g_T^\eps]-R[f_{ \rmI},g_T]\|_{\LdC}}{\|R[f_{ \rmI},g_T]\|_{\LdC}}\,,
\end{equation*}
and plot the convergence rate with respect to $\eps$, as shown in the right column of Figure~\ref{fig:R_err}. In both examples, the plots suggest a decay rate of $O(\varepsilon^2)$.

\begin{figure}[tbhp]
  \centering
  \subfloat[$b_x = c_x = 0.1875$]{\label{fig:R_f1_g1}
  \includegraphics[width=0.45\textwidth,height = 0.18\textheight]{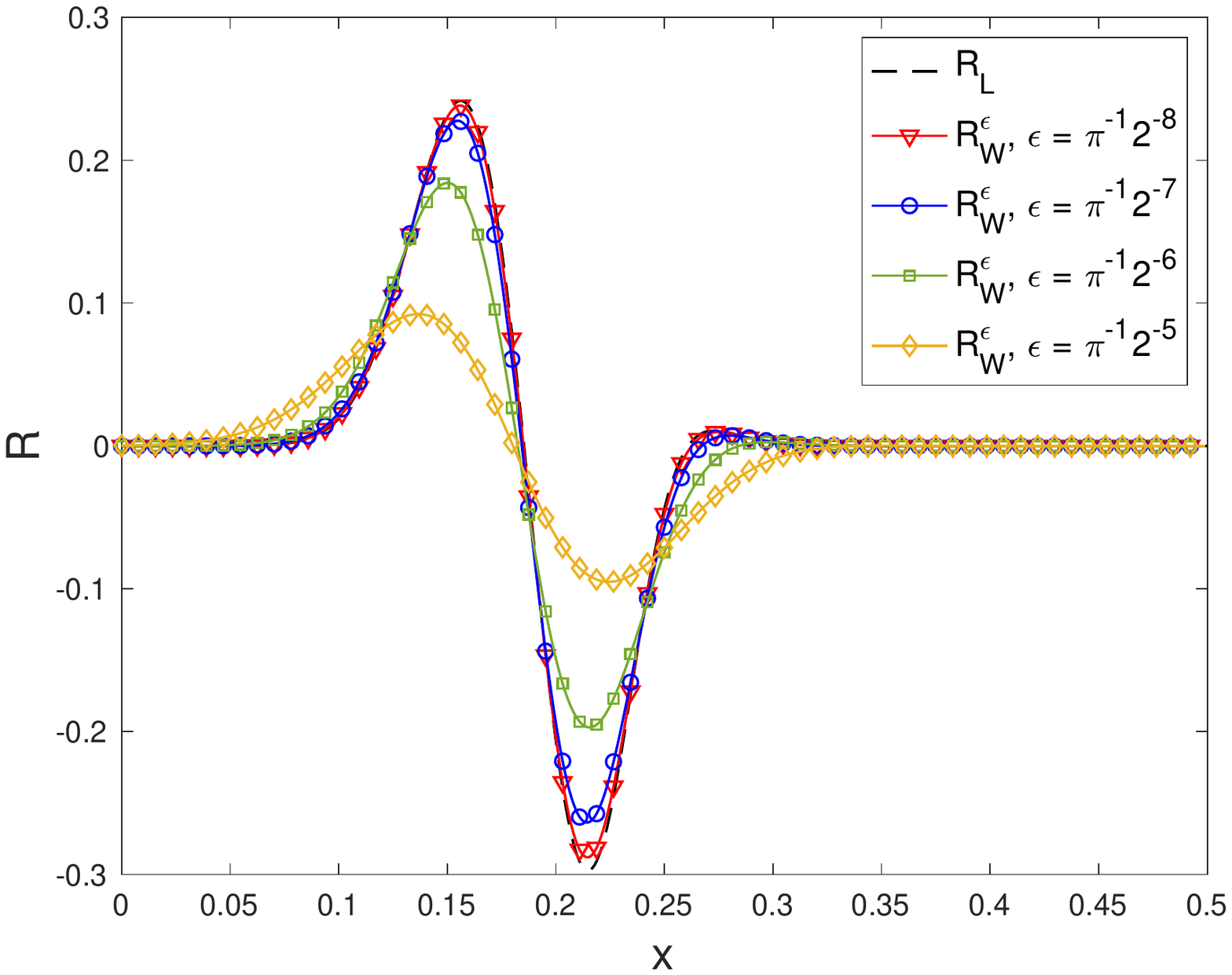}
  \includegraphics[width=0.45\textwidth,height = 0.18\textheight]{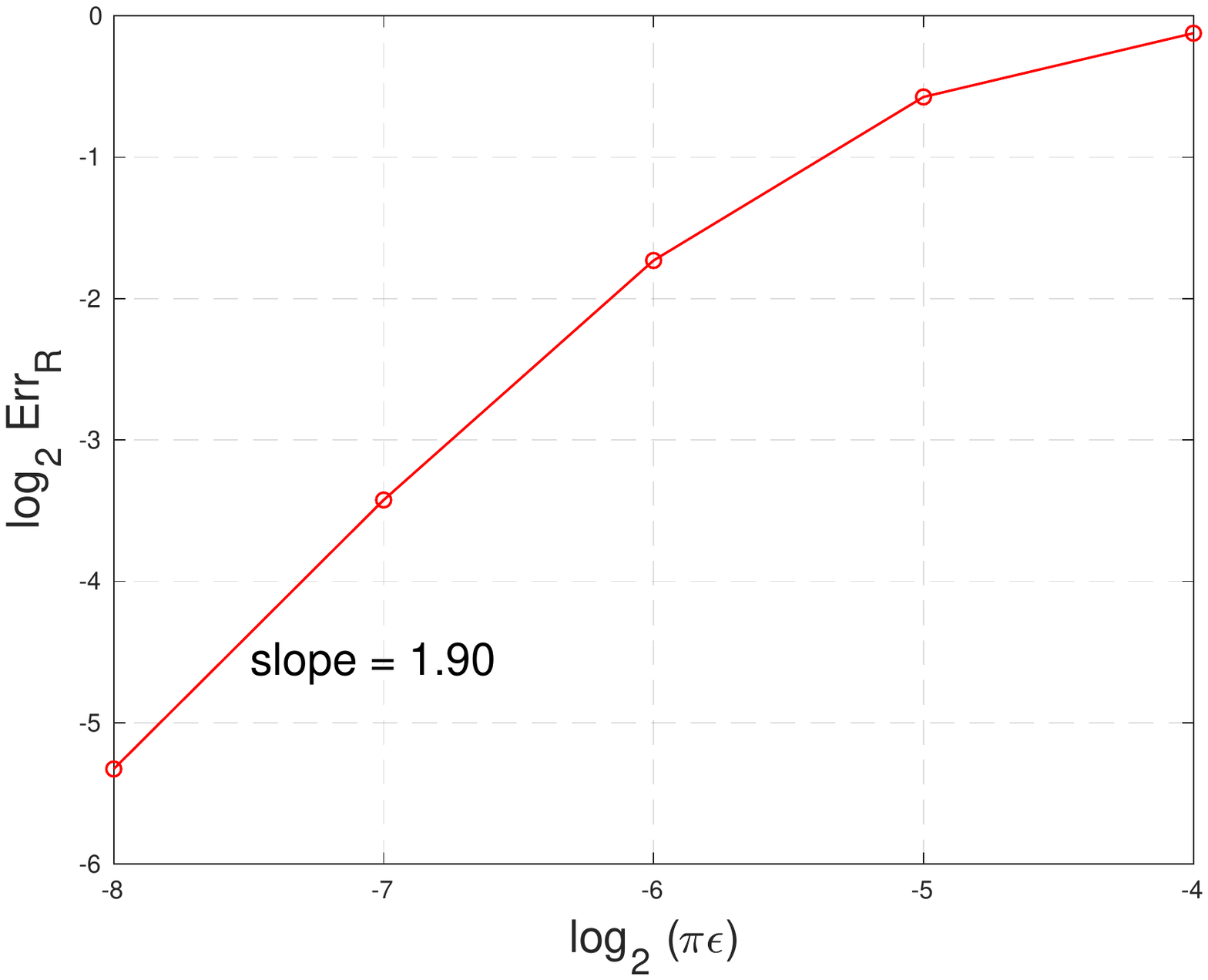}
  }

  \subfloat[$b_x = c_x = 0.25$]{\label{fig:R_fm_gm}
  \includegraphics[width=0.45\textwidth,height = 0.18\textheight]{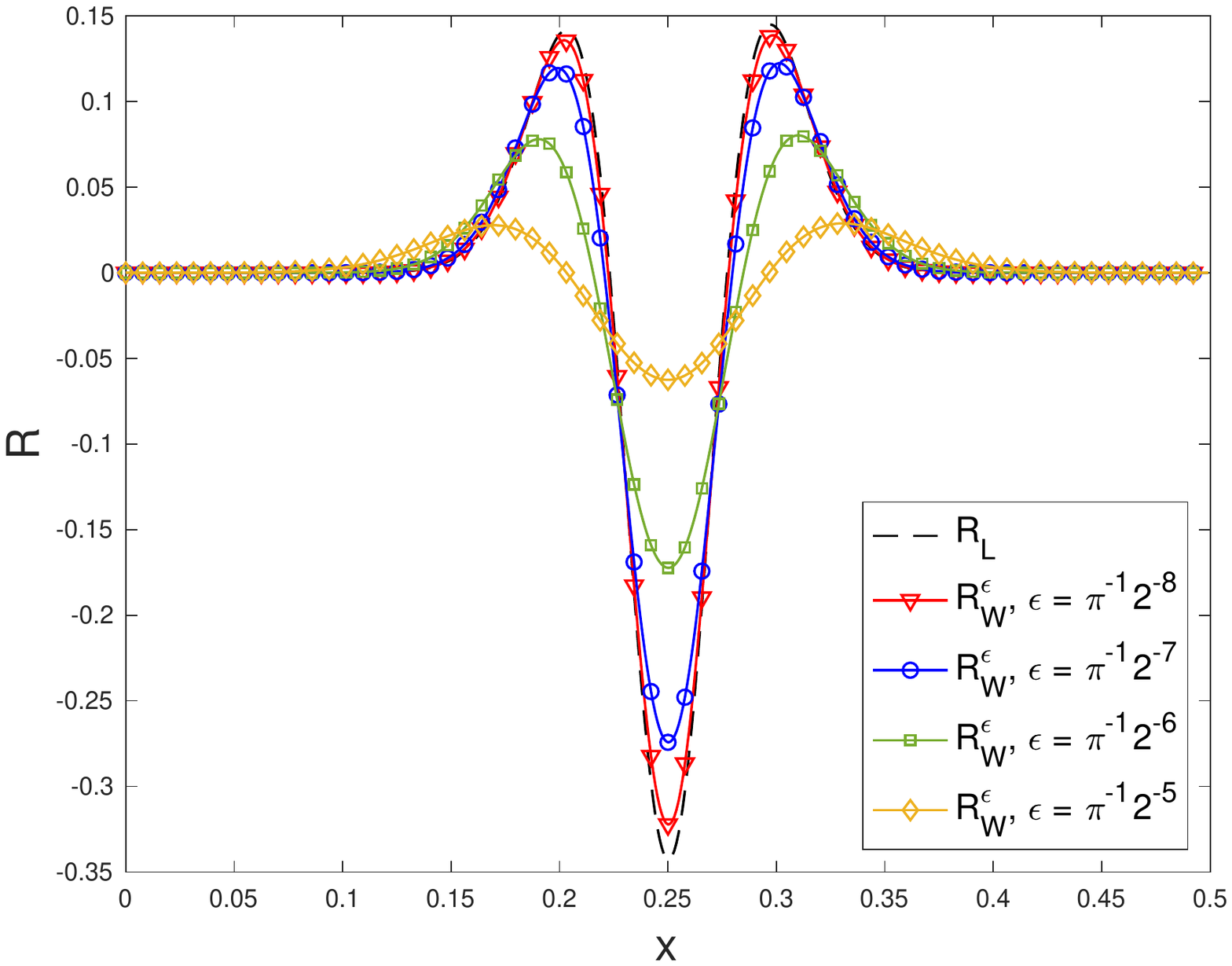}
  \includegraphics[width=0.45\textwidth,height = 0.18\textheight]{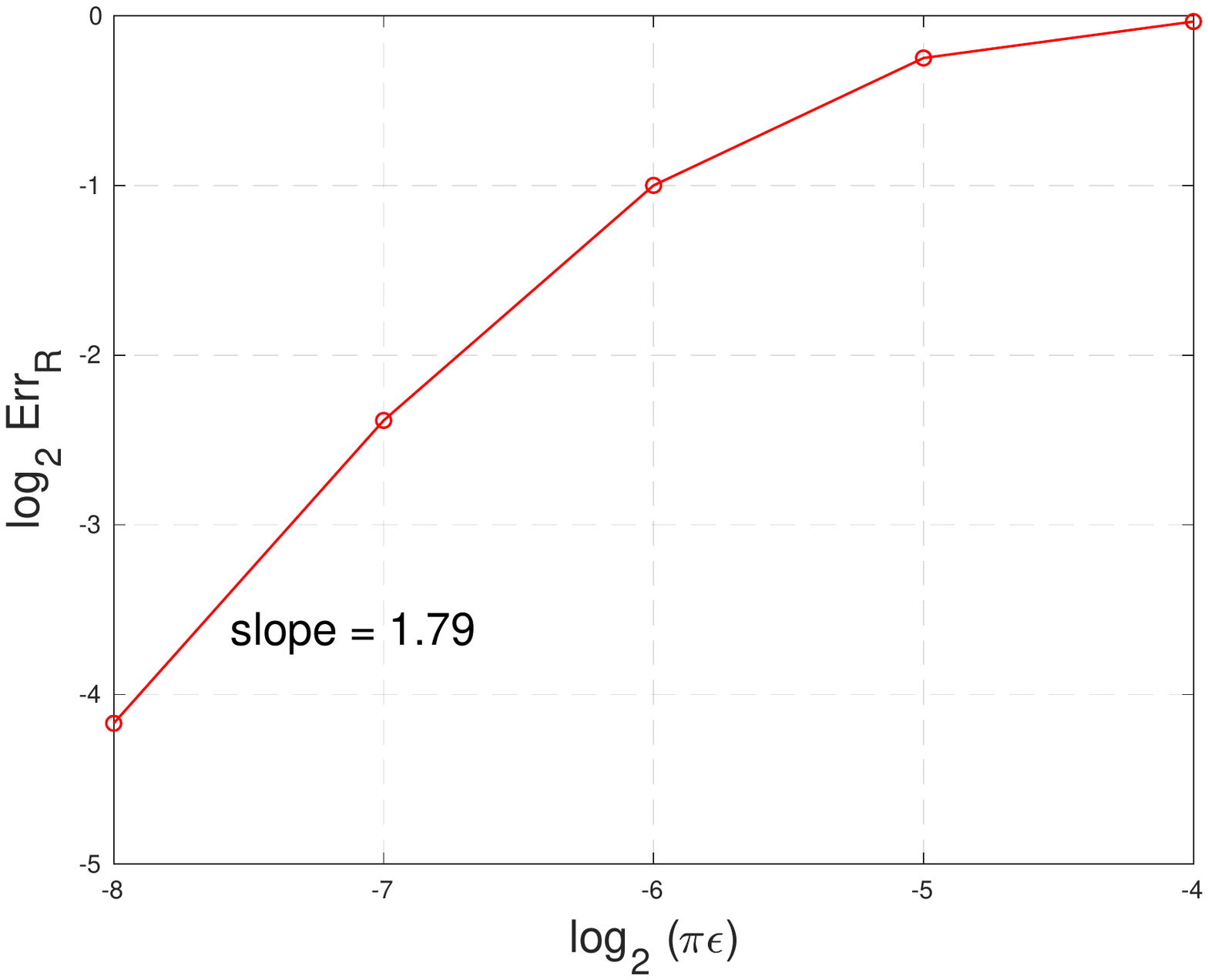}
  }

  \caption{The left column compares the Wigner representative $R_\rmW^\eps[f_{\rmb \rmI}^\eps,g_T^\eps]$ with different values of $\eps$ and the limiting Liouville representative $R_\rmL[f_{\rmb \rmI},g_T]$. The right column shows $\mathrm{Err}_\rmR(\eps)$ as a function of $\eps$. The decay rate suggests that $\mathrm{Err}_\rmR(\eps)$ is of $O(\varepsilon^2)$.} \label{fig:R_err}
\end{figure}

We finally demonstrate the convergence for a large set of basis functions. To do so, we first take the interval $[x_l,x_r] = [0.1875,0.3125]$, and denote the discrete points in the interval $x_i = x_l + i\Delta x$, with $i\in\Nb$ and $x_l\leq x_i\leq x_r$. Considering $\Delta x = 2^{-10}$, we have $N = 129$ configurations of $x_i$. We then let $b_{x}$ and $c_x$, the centers for $f^\eps_{ \rmI}$ and $g^\eps_T$ taking these configurations. The combination provides us a large set of initial/final time data $f_{ \rmI,i}^\eps$ and $g_{T,j}^\eps$. We compute the corresponding solutions, termed $f^\eps_i$ and $g^\eps_{j}$ and formulate a set:
\begin{equation*}
\mathcal{R}^\eps_\rmW=\{R^\eps_{\rmW,ij} = R^\eps_\rmW[f_{ \rmI,i}^\eps\,,g^\eps_{T,j}]\,,\;i,j=0\,,\cdots\,,128 \}\,.
\end{equation*}
The same process is done to obtain $R_{\rmL,ij}$ and the set $\mathcal{R}_\rmL$.

We now compare the set $\mathcal{R}_\rmW^\eps$ and $\mathcal{R}_\rmL$. We first compare the singular values of the two sets. In Figure~\ref{fig:svd}, we plot the relative singular value decay of both $\mathcal{R}_\rmW^\eps$ at different values of $\eps$, and $\mathcal{R}_\rmL$. As $\eps\to0$, it is clear the decay profile converges. We also quantify the convergence of relative singular value using the following error term:
\begin{equation*}
\mathrm{Err}_{s,i}(\eps) = \frac{|s_i^\eps-s_i|}{|s_i|}\,,
\end{equation*}
where $s_i^\eps$ is the $i$th relative singular value of $\mathcal{R}^\eps_\rmW$, and $s_i$ is the $i$th relative singular value of $\mathcal{R}_\rmL$. In Figure~\ref{fig:err_s}, we plot $\mathrm{Err}_{s_i}(\eps)$ as a function of $\eps$ for $i = 2, 3, 4, 5$. It is clear that the relative singular values of $\mathcal{R}^\eps_\rmW$ converge to their counterparts in the $\eps\to0$ classical limit.

\begin{figure}[tbhp]
  \centering
  \subfloat[]{\label{fig:svd}
  \includegraphics[width=0.45\textwidth,height = 0.18\textheight]{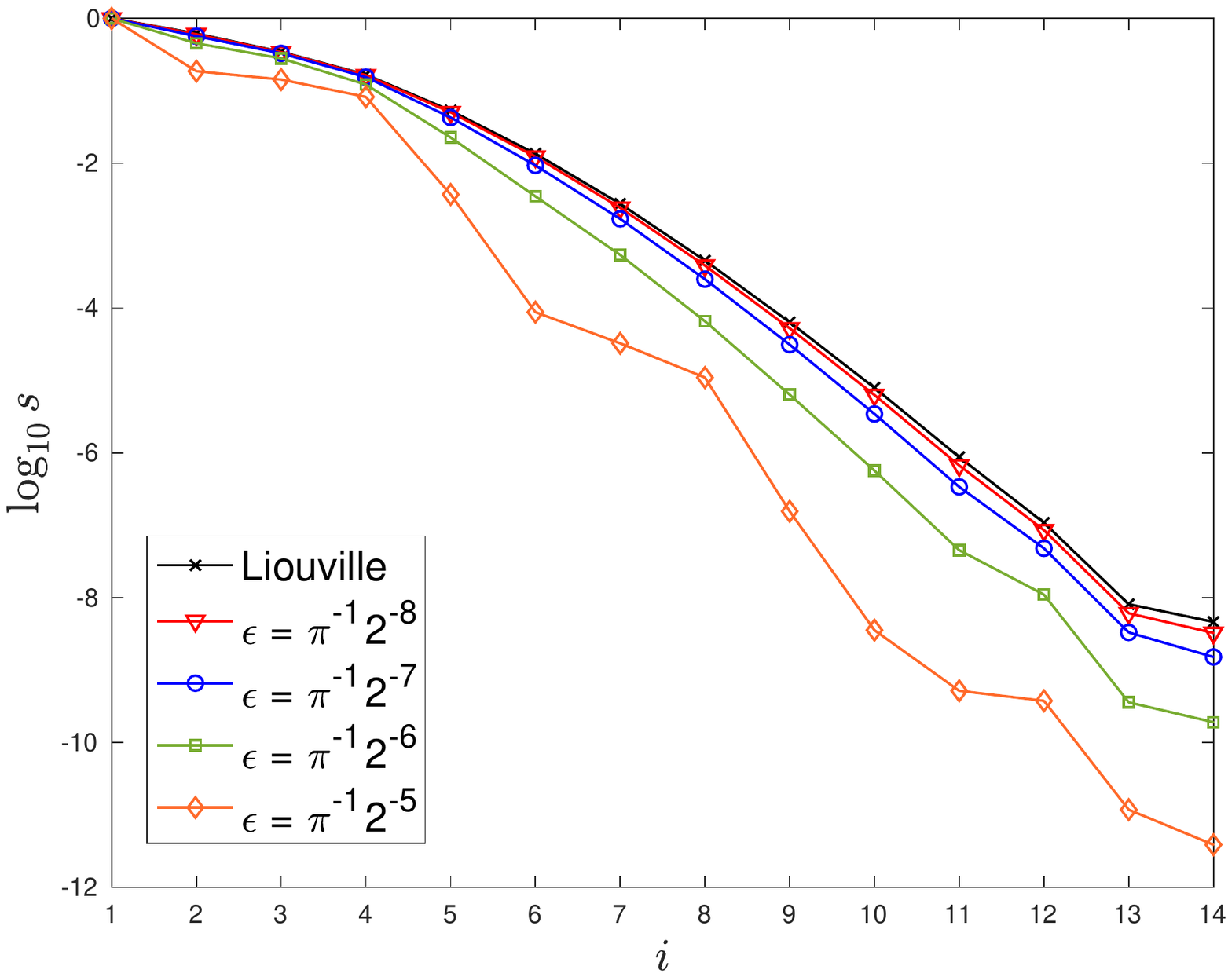}
  }
  \subfloat[]{\label{fig:err_s}
  \includegraphics[width=0.45\textwidth,height = 0.18\textheight]{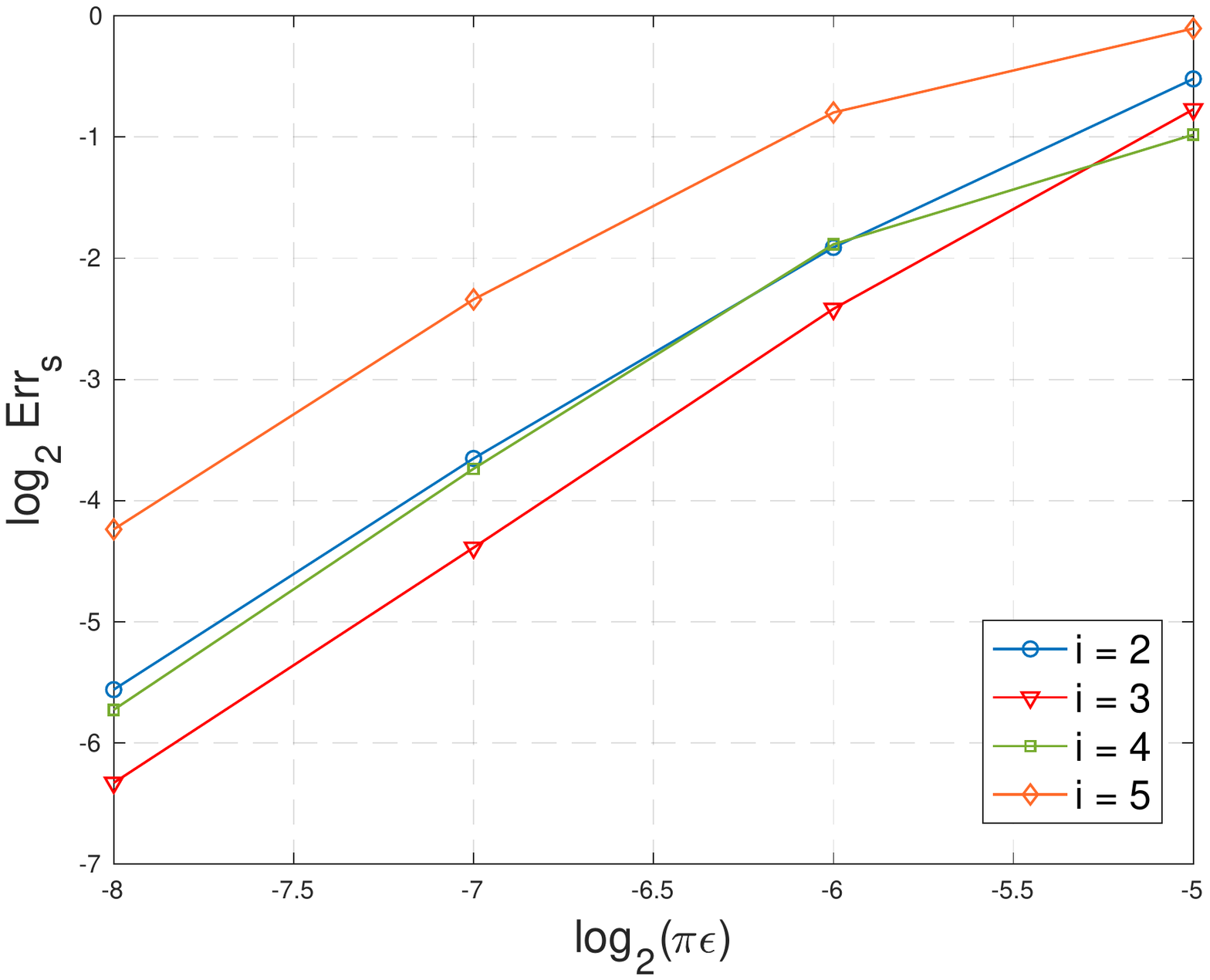}
  }
  \caption{(a) The relative singular values of $\mathcal{R}_\rmL$ and $\mathcal{R}^\eps_\rmW$ at different values of $\eps$. (b) $\mathrm{Err}_{s_i}(\eps)$ as a function of $\eps$ for the 2nd to the 5th relative singular values.}
\end{figure}

We then compare the left singular vectors of the basis. In Figure~\ref{fig:eig_vec}, we show the first, third, seventh and tenth left singular vectors of $\mathcal{R}^\eps_\rmW$. As $\eps\to0$, the profiles converge to those of $\mathcal{R}_\rmL$. To quantify such convergence, we let $Q_k^\eps$ and $Q_k$ to denote the column spaces (orthonormalized) spanned by the first $k$ left singular vectors of $\mathcal{R}^\eps_\rmW$ and $\mathcal{R}_\rmL$, respectively, and define the angle between the spaces:
\begin{equation*}
\mathrm{Err}_{\mathcal{R},k}= \|Q_k - Q_k^\eps (Q_k^\eps)^\top Q_k\|_2\,.
\end{equation*}
The angle between the two spaces are shown to converge as $\eps\to0$ for different values of $k$ in Figure~\ref{fig:angle}.

\begin{figure}[tbhp]
  \centering
  \subfloat[The first singular vector]{
  \includegraphics[width=0.45\textwidth,height = 0.18\textheight]{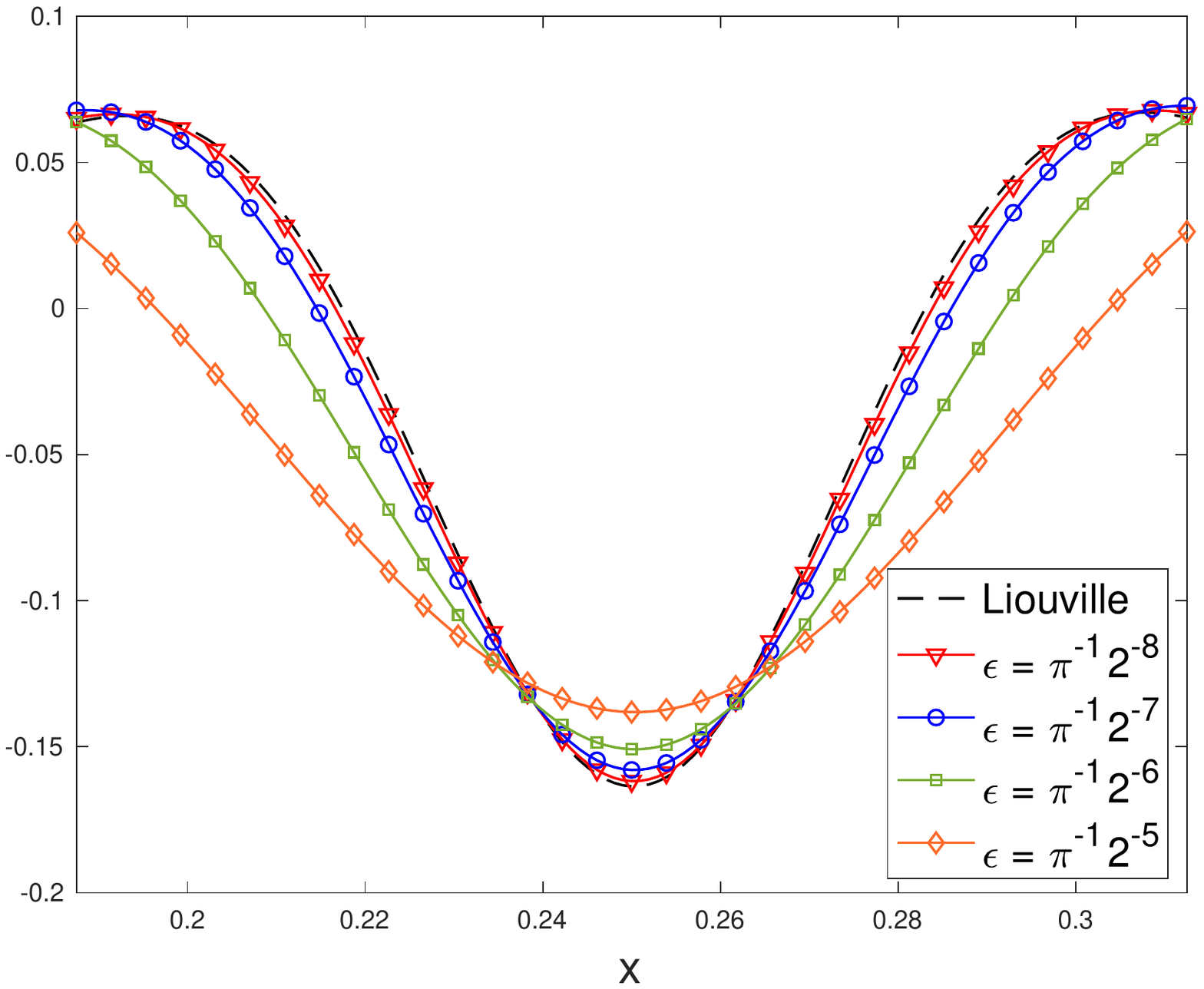}
  }
  \subfloat[The third singular vector]{
  \includegraphics[width=0.45\textwidth,height = 0.18\textheight]{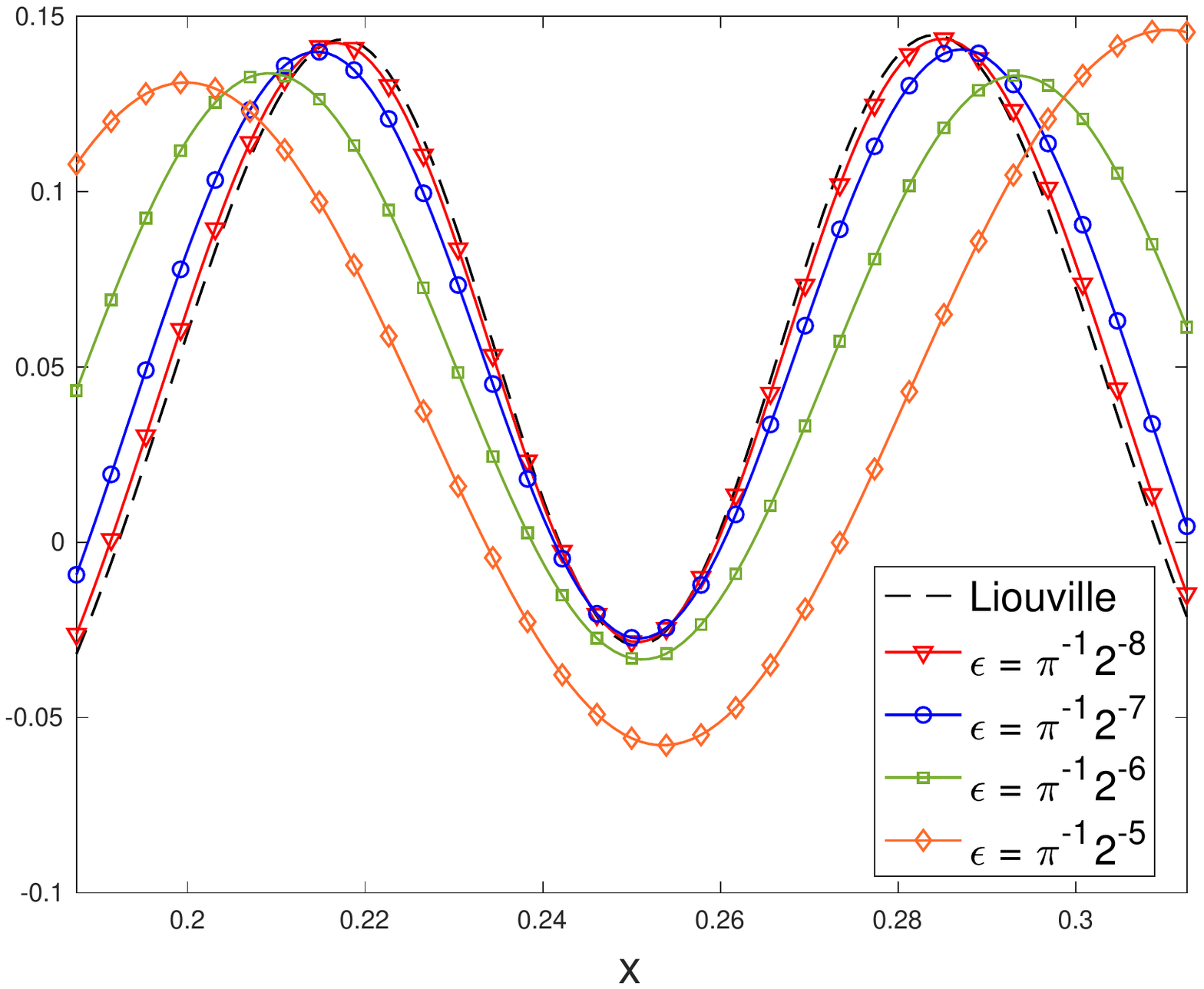}
  }

  \subfloat[The seventh singular vector]{
  \includegraphics[width=0.45\textwidth,height = 0.18\textheight]{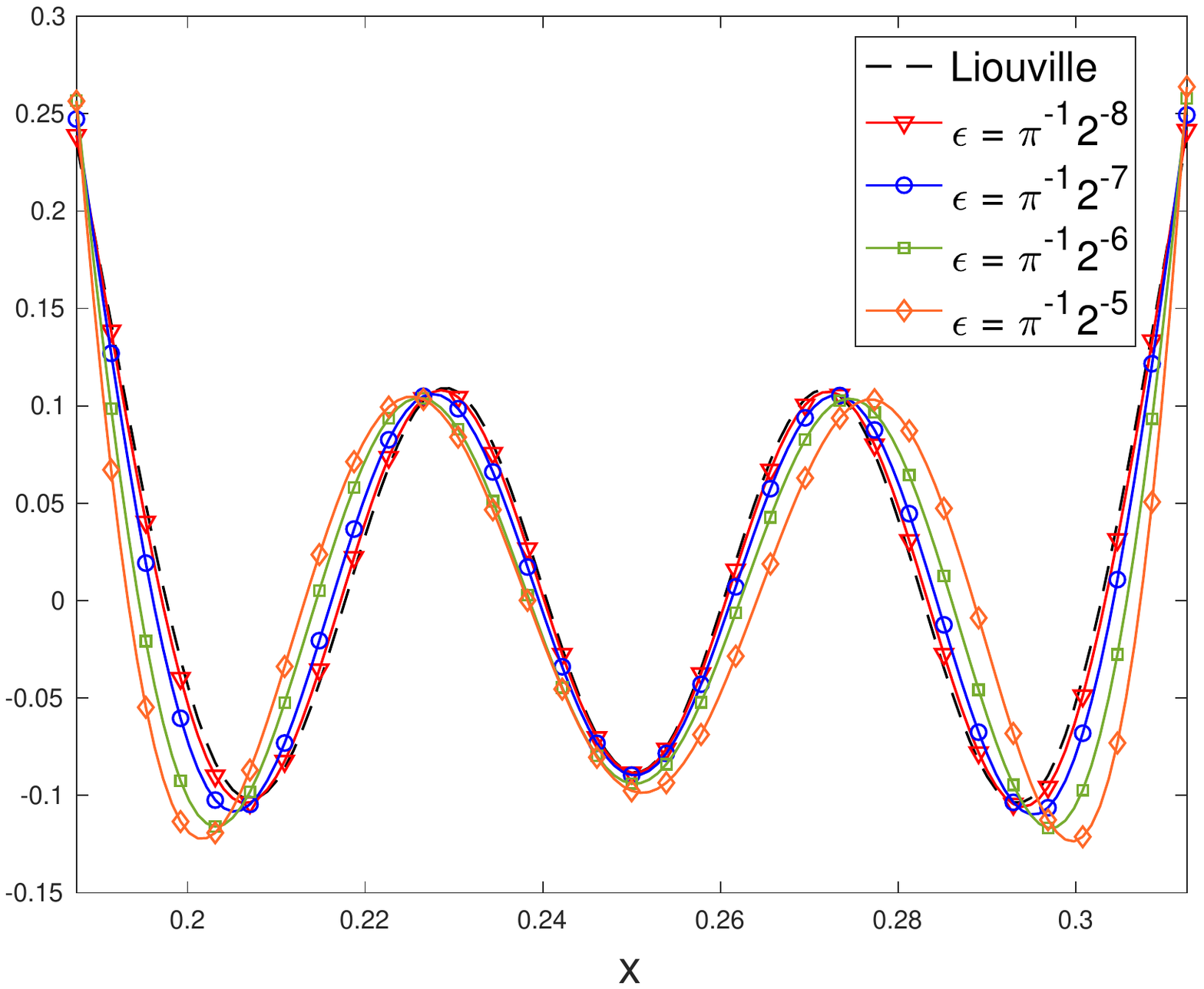}
  }
  \subfloat[The tenth singular vector]{
  \includegraphics[width=0.45\textwidth,height = 0.18\textheight]{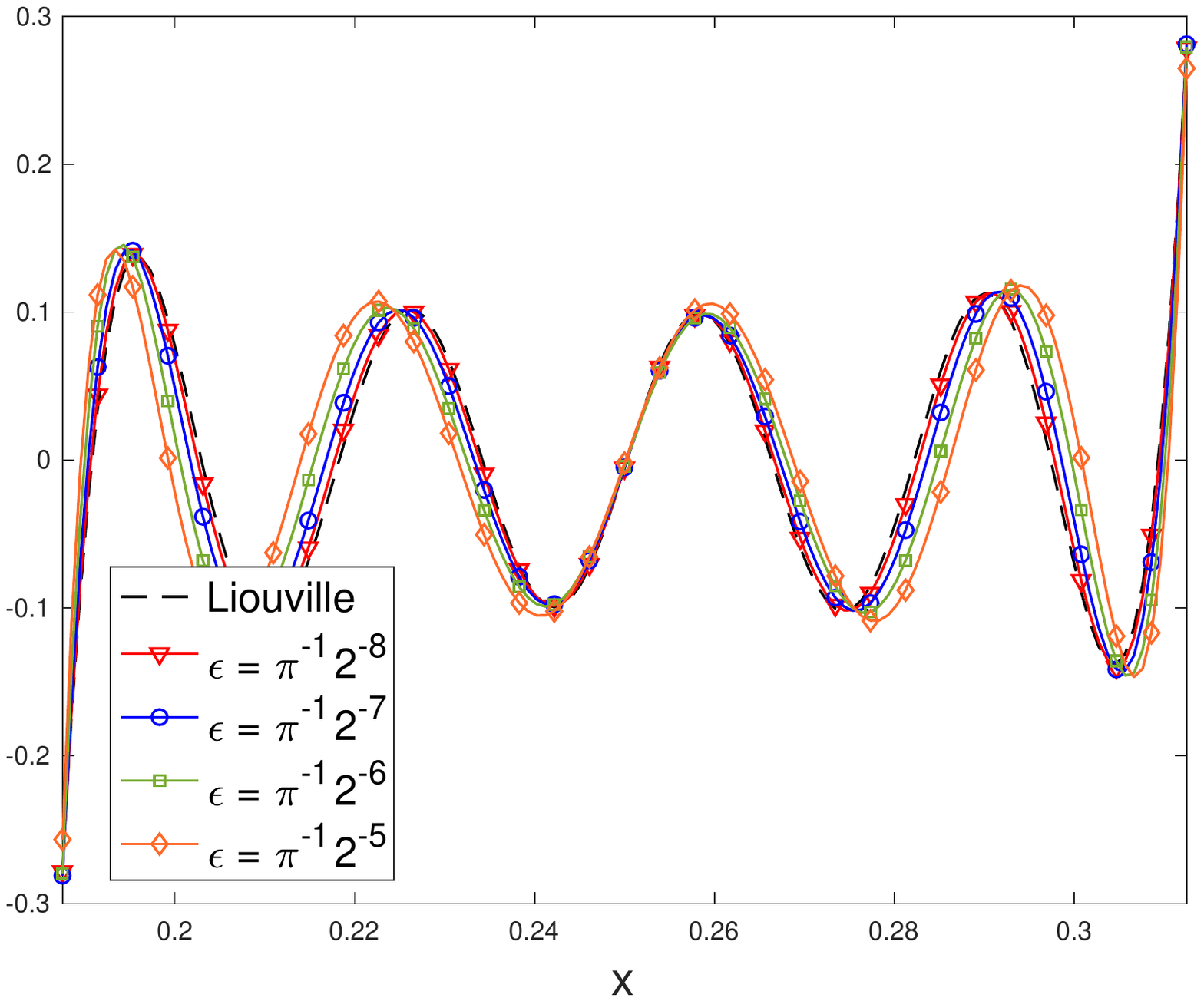}
  }

  \caption{The singular vectors of $\mathcal{R}_\rmL$ and $\mathcal{R}^\eps_\rmW$ at different values of $\eps$.}\label{fig:eig_vec}
\end{figure}

\begin{figure}[tbhp]
  \centering
  \includegraphics[width=0.45\textwidth,height = 0.18\textheight]{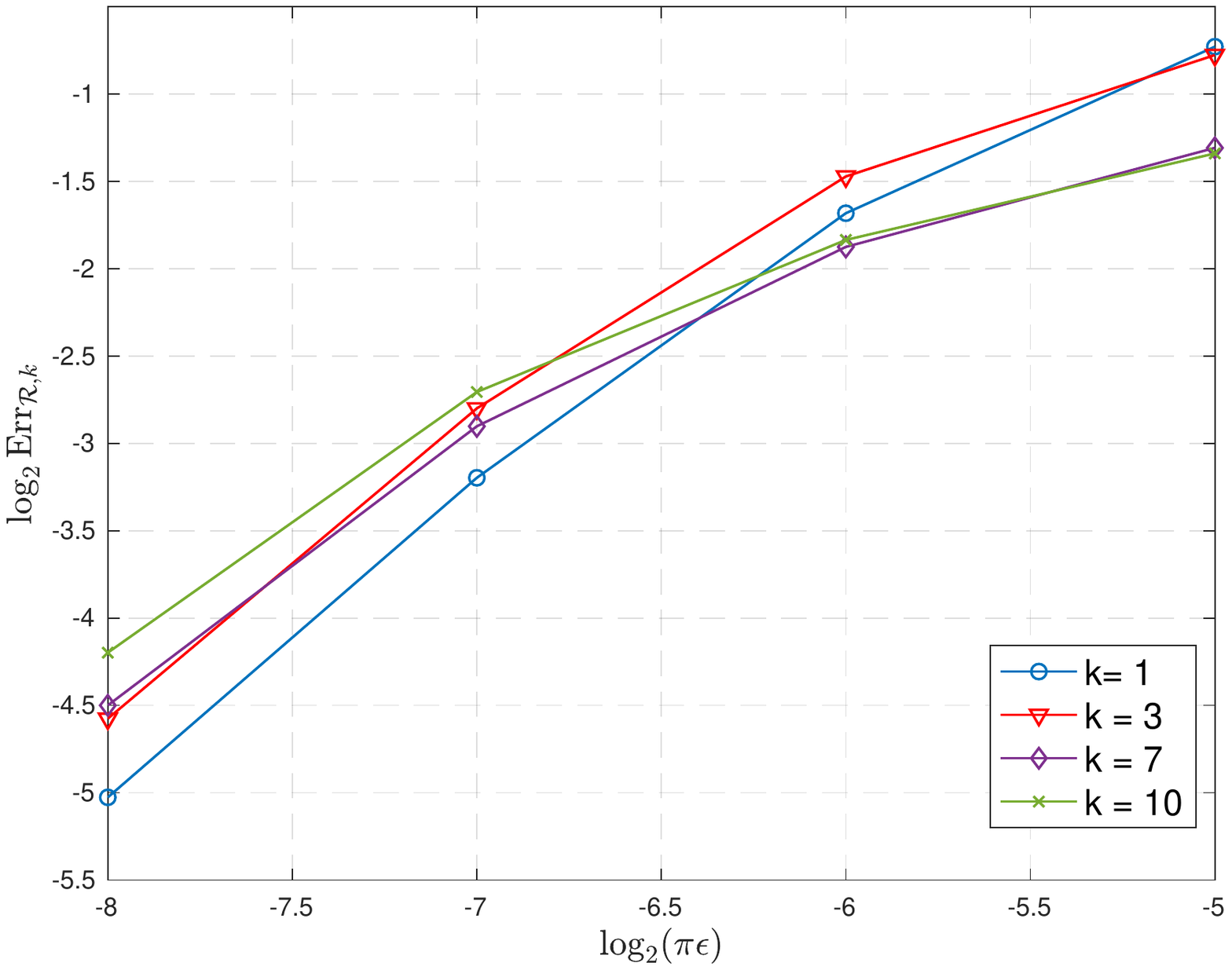}

  \caption{$\mathrm{Err}_{\mathcal{R},k}$ as a function of $\eps$ for $k=1, 3, 7, 10$.}\label{fig:angle}
\end{figure}

\section{Conclusion}\label{sec:conclusion}
It is a well-known result that the Schr\"odinger equation leads to the Newton's second law in the classical limit, when the rescaled Planck constant $\eps\to0$. We investigate this limit in the inverse setting. More specifically, we assume the initial and final data is available and we study if the initial-final data pairs can reconstruct the potential term in the Schr\"odinger equation. The investigation is done in the linearized setting, assuming the potential is close enough to a background, and this boils the problem down to the study of the representative of the associated Fredholm integral derived from the inverse problem.

We employ the Wigner transform tool. In particular, we translate the information of the Schr\"odinger equation to that of the Wigner equation, and pass its limit to obtain the Liouville equation which presents particle trajectories following the classical mechanics. We are able to show that the representative derived under the Wigner framework indeed converges to the representative derived under the Liouville framework when $\eps\to0$, and thus we link the inverse Schr\"odinger problem with the inverse Newton's law of motion.


%
\section*{Conflict of interest}
The authors declare that they have no conflict of interest.


\bibliographystyle{siamplain}
\bibliography{InverseSchr}

\end{document}